\documentclass[12pt]{article}
\usepackage{booktabs}
\usepackage{multirow}
\usepackage{mathrsfs}
\usepackage{amsthm}
\usepackage{fullpage}
\usepackage{footnote}
\usepackage{tabu}
\usepackage{booktabs}
\usepackage[title]{appendix}
\usepackage{scribe}
\usepackage{caption}
\usepackage{amsmath}
\usepackage{setspace}
\usepackage{stmaryrd}
\providecommand{\keywords}[1]
{
  \small
  \textbf{\textit{Keywords---}} #1
}
\usepackage{etoolbox}

\makeatletter
\patchcmd{\@makecaption}
  {\parbox}
  {\advance\@tempdima-\fontdimen2}
  {}{}
\makeatother

\usepackage{titling}

\usepackage{amsmath,bbm,amssymb, amsfonts}
\usepackage{mathtools}
\usepackage{graphicx,psfrag,epsf}
\usepackage{enumitem}
\usepackage{natbib}
\usepackage{url} %
\usepackage{color}
\usepackage{algorithmicx}
\usepackage{booktabs}
\usepackage{siunitx}
\usepackage{array}
\usepackage{rotating}

\usepackage{adjustbox}

\usepackage{algorithm}
\usepackage{algpseudocode}

\usepackage{tikz}

\usepackage{url}

\usepackage{multicol}
\usepackage{xcolor}
\usepackage{algorithm}

\input{includes/GrandMacros.tex}

\usepackage[colorlinks, linkcolor=blue, anchorcolor=blue, citecolor=blue]{hyperref}

\newtheorem{thm}{Theorem}[section]

\newtheorem{lem}[thm]{Lemma}

\newtheorem{assumption}{Assumption}

\usepackage[left=1in,right=1in,top=1in,bottom=1in]{geometry}

\begin{document}

\title{\bf BAMIFun: Bayesian Multiple Imputation for Functional Data}
\author{Ziren Jiang, Lei Xuan, Eric F. Lock, Erjia Cui\thanks{corresponding author: ecui@umn.edu}\\
        Division of Biostatistics, University of Minnesota}
\date{}
\maketitle

\begin{abstract}
Missing data are pervasive in modern functional datasets, where trajectories are often sparsely or irregularly observed. Although Functional Principal Component Analysis (FPCA) is widely used to reconstruct incomplete curves, existing approaches typically employ single imputation, leading to overly optimistic inferences in downstream analyses. To address these challenges, we develop a novel Bayesian multiple imputation framework for functional data (BAMIFun).
For single-level functional data, we impose a Bayesian low-rank model that incorporates penalized spline representations to enforce smoothness of the functional domain and derive an efficient Gibbs sampler algorithm for posterior computation.
In addition, we demonstrate and validate how to properly account for estimation uncertainties in downstream analysis.
Furthermore, we extend the framework to multiway functional data using Functional Tensor Singular Value Decomposition (FTSVD) model, enabling Bayesian multiple imputation in settings not supported by existing methods. Simulation studies show that BAMIFun achieves substantially improved coverage and more reliable downstream inference compared to existing methods, while maintaining similar imputation accuracy. Case studies using a physical activity dataset and an infant gut microbiome dataset further demonstrate the practical advantages of our proposed methods under severe missingness. The code is available at \href{https://github.com/ZirenJiang/BAMIFun}{https://github.com/ZirenJiang/BAMIFun}.

\keywords{
Functional data analysis, missing data, Bayesian inference, multiway data, multiple imputation.
}
\end{abstract}%

\newpage
\setstretch{1.775}
\setlength{\abovedisplayskip}{7pt}%
\setlength{\belowdisplayskip}{7pt}%
\setlength{\abovedisplayshortskip}{5pt}%
\setlength{\belowdisplayshortskip}{5pt}%

\section{Introduction}
Functional data are often observed on sparse or irregular grids.
A wide range of methods has been developed for modeling such data, including Functional Principal Component Analysis (FPCA) \citep{yao2005functional}, Bayesian joint registration and curve estimation \citep{matuk2022bayesian}, and matrix completion \citep{kidzinski2024modeling}. 
Despite these advances, statistical analysis of sparsely observed functional data remains challenging. Major difficulties include the implicit reliance of many methods on dense design assumptions \citep{kong2016classical}, the limited number of tailored software programs 
\citep{wood2015package}, 
and performance degradation under severe missingness.

A common approach to handling sparse or irregular functional data is to impute missing observations and reconstruct subject-specific trajectories on a regular grid, a method first proposed by \citet{yao2005functional} using FPCA. Specifically, each trajectory is reconstructed by estimating principal component scores via conditional expectations. 
This FPCA-based approach has since been extended to multilevel \citep{di2009multilevel, zipunnikov2011multilevel, cui2023fast}, longitudinal \citep{greven2011longitudinal}, structural \citep{shou2015structured}, and multivariate \citep{chiou2014multivariate, happ2018multivariate} settings. In what follows, we refer to this type of method as the ``PACE approach'', as coined in \citet{yao2005functional}. The PACE approach produces a single imputed value at each missing location, 
hence belonging to the ``single imputation" approaches. Because single imputation treats the reconstructed trajectories as the truth, the uncertainty measures in downstream analyses can often be overly optimistic, particularly in cases of severe sparsity \citep{rao2021modern, petrovich2022highly}.

In contrast, the multiple imputation framework \citep{rubin1996multiple} explicitly incorporates the imputation uncertainty by generating multiple plausible completions of missing observations. The variability between these completed datasets reflects the uncertainty inherent in the imputation process, leading to more valid inference for downstream analysis. Several multiple imputation approaches have been proposed for sparse functional data. \citet{petrovich2022highly} introduced a frequentist multiple imputation procedure that leverages information from a scalar outcome, assuming a generalized additive model 
and drawing imputations from the corresponding conditional distribution. \citet{rao2021modern} adapted the missForest algorithm to accommodate functional variables. \citet{he2011functional} modeled functional and scalar covariates jointly through a functional mixed-effects framework and developed a Bayesian multiple imputation strategy. \citet{jang2021bayesian} proposed a Bayesian multiple imputation method for bivariate functional data that exploits the correlation structure between two functional covariates. 

Although effective in their respective contexts, existing methodologies often require auxiliary information, such as scalar outcomes, scalar covariates, or supplementary functional covariates, for imputation. When such information is unavailable or when its relationship with functional data is misspecified, the imputation performance becomes questionable. Moreover, we are not aware of any existing multiple imputation method for more complex functional data structures, such as multiway functional data. Multiway data
consist of observations indexed over more than two modes (e.g., subject $\times$ feature $\times$ visit) and arise naturally in modern applications such as 
genomics \citep{li2025integrative}. When one of these modes is a continuum such as time or space, the data become multiway functional. 
Recently, \citet{jiang2025bamita} proposed a Bayesian multiple imputation algorithm for tensor arrays (BAMITA), which imputes tensor data via a low-rank CP decomposition. However, BAMITA is designed for discrete tensors: it treats the temporal argument as an unordered mode and does not exploit the smoothness of functional data. As a result, it cannot borrow information from neighboring points for imputation.

To fill in these critical methodological gaps, we develop BAMIFun: a Bayesian multiple imputation framework for functional data. The BAMIFun algorithm (1) embeds penalized splines within a Bayesian framework to enforce trajectory smoothness; (2) performs multiple imputation for single-level functional data via a low-rank representation and achieves valid downstream inference by pooling estimates using Rubin’s rules; and (3) provides the first multiple imputation procedure for multiway functional data by developing a Bayesian sampler for the functional tensor singular value decomposition (FTSVD) model \citep{han2024guaranteed}.
Through extensive simulations and real-data applications, we demonstrate that BAMIFun achieves similar imputation accuracy to the PACE approach, while providing substantially improved coverage for imputed entries and more reliable uncertainty quantification in downstream functional regressions. Moreover, BAMIFun achieves substantially higher imputation accuracy compared to BAMITA, especially in cases of severe missingness.

The remainder of the manuscript is organized as follows. Section \ref{sec:2} introduces the problem setup. Section \ref{sec:3} presents BAMIFun for single-level functional data. Section \ref{sec:4} extends the framework to multiway functional data. Sections \ref{sec:5} and \ref{sec:6} evaluate BAMIFun using simulated and real-world data, respectively. Section \ref{sec:7} concludes with a discussion.

\section{Notation and setup}\label{sec:2}
We denote the single-level functional data as $\{X_i(t_k)\}$, where $i = 1,\ldots,N$ indexes subjects and $\{t_k \in \mathcal{I}$, $k = 1,\ldots,K\}$ represents a union of observation points over a continuum $\mathcal{I}$.
We consider the matrix representation for the functional data $\mathbf{X}\in \mathbb{R}^{N\times K}$, where the entry in the $i$-th row and $k$-th column of $\mathbf{X}$ is $\mathbf{X}_{ik} =X_i(t_k)$. In practice, not all $X_i(t_k)$ are observed for $k=1,...,K$. Let $\mathcal{O}=\{\mathcal{O}_{ik}\}_{i=1,...,N;k=1,...,K} \in \{0,1\}^{N\times K}$ be the observation indicator matrix, where $\mathcal{O}_{ik}=1$ if $X_i(t_k)$ is observed for subject $i$ at time point $t_k$ and $\mathcal{O}_{ik}=0$ otherwise. Let $Y_i$ denote a scalar outcome for subject $i$, which is used exclusively in downstream analyses and is not incorporated into the imputation process. 

For multiway functional data, denote the data as $\{X_{ij}(t_k)\}$, where $i=1,...,N$ represents each subject, $j=1,...,J$ indexes another mode such as visits, and $k=1,...,K$ indexes the observation points. 
We use a tensor $\mathcal{X}$ with the dimensionality of $N\times J\times K$ to represent the multiway functional data, where $\mathcal{X}_{ijk}=X_{ij}(t_k)$. We denote $\mathcal{O}=\{\mathcal{O}_{i j k}\}_{i=1,...,N; j=1,...,J; k=1,...,K}$ as the observation indicator tensor. 
Let $\mathbf{a} \circ\mathbf{b}$ denote the outer product of two vectors $\mathbf{a}$ and $\mathbf{b}$. Let $\mathbf{A} \otimes \mathbf{B}$ denote the Kronecker product of matrices $\mathbf{A}$ and $\mathbf{B}$, and $\mathbf{A} \odot \mathbf{B}$ denote the \it{Khatri-Rao} \normalfont product.
Let $\textnormal{Vec}(\cdot)$ denote the vectorization operator of matrix or tensor object. For matrix $\mathbf{A}$, let $\mathbf{A}^T$ denote its transpose, $\mathbf{A}_{ik}$ denote the element in the $i$-th row and $k$-th column, $\mathbf{A}_{i\cdot}$ denote its $i$-th row, and $\mathbf{A}_{\cdot k}$ denote its $k$-th column.

\vspace{-4ex}
\section{Single-level functional data}\label{sec:3}
\subsection{The model}
We assume that the functional data follow the structure $X_i(t) = \mu(t)+\Tilde{X}_i(t)+\epsilon_i(t)$, where $\mu(t) = \mathbb{E}[X_i(t)]$ for any fixed $t$, $\Tilde{X}_i(t)$ is a stochastic process with mean zero, and $\epsilon_i(t)$ is a white noise. Without loss of generality, we assume $\mu(t)=0$ (i.e., the data are centered) for the remainder of the manuscript. 
Our Bayesian multiple imputation framework assumes that $\mathbf{X}$ has the following low-rank structure with rank $R$:
\begin{equation}\label{eq: matrix_model_iidGaussian}
    \mathbf{X} = \mathbf{V}\mathbf{U}^T + \mathbf{E}\;,
\end{equation}
where $\mathbf{V}$ is an $N\times R$ matrix, $\mathbf{U}$ is a $K\times R$ matrix, and $\mathbf{E}$ is an $N\times K$ matrix of i.i.d. Gaussian error terms with mean 0 and variance $\sigma^2$. %
The low-rank structure presented in \eqref{eq: matrix_model_iidGaussian} is common for functional data. For example, under the FPCA decomposition, $R$ is the number of principal components, $\mathbf{U}$ is the matrix of eigenfunctions, and $\mathbf{V}$ is the matrix of corresponding scores. Note that model \eqref{eq: matrix_model_iidGaussian} identifies $\mathbf{V}$ and $\mathbf{U}$ only up to the transformation $(\mathbf{V}\mathbf{Q},\mathbf{U}\mathbf{Q}^{-T})$ for any invertible $R\times R$ matrix $\mathbf{Q}$, so the individual factors are not separately identifiable. Because our imputation depends on the factors only through the transformation-invariant reconstruction $\mathbf{V}\mathbf{U}^{T}$, the imputed values and their posterior credible intervals remain well-defined.

The functional data $\mathbf{X}$ is assumed to be smooth along the functional domain $\mathcal{I}$ by the low-rank factor matrix $\mathbf{U}$.
Specifically, we further represent $\mathbf{U}^T$ using $L$ spline basis functions such that
$\mathbf{U}^T = \mathbf{B} \mathbf{\Theta}$,
where $\mathbf{\Theta}$ is an $L\times K$ matrix of spline basis functions evaluated over the grid $\{t_k\}_{k=1}^K$, and $\mathbf{B}$ is the corresponding $R\times L$ matrix of spline coefficients. Model \eqref{eq: matrix_model_iidGaussian} can then be re-expressed as $\mathbf{X} = \mathbf{V} \mathbf{B} \mathbf{\Theta} + \mathbf{E}$. The additional smoothness constraint on $\mathbf{U}$ can be induced by adding a penalty on the integral of the squared second derivative $\int (u_r''(t))^2dt$, where $u_r''(t)$ is the second-order derivative of the $r$-th column $u_r(t)$ of the matrix $\mathbf{U}$. 
Under the spline basis expansion $\mathbf{U}^T = \mathbf{B} \mathbf{\Theta}$, we have
\begin{equation}\label{eq: penalized_spline}
    \begin{split}
        \int (u_r''(t))^2dt & = \int \bigg(\sum_{l=1}^L \mathbf{B}_{rl}\mathbf{\Theta}_{l\cdot}''(t)\bigg)^2 dt
        = \int \sum_{l_1=1}^L\sum_{l_2=1}^L \mathbf{B}_{rl_1}\mathbf{\Theta}_{l_1\cdot}''(t)\,\mathbf{\Theta}_{l_2\cdot}''(t)\,\mathbf{B}_{rl_2} dt\\
        & = \mathbf{B}_{r\cdot}^T \bigg(\int \mathbf{\Theta}''(t)[\mathbf{\Theta}''(t)]^Tdt\bigg)\mathbf{B}_{r\cdot} = \mathbf{B}_{r\cdot}^T \mathbf{P} \mathbf{B}_{r\cdot}\;,
    \end{split}
\end{equation}
where $\mathbf{B}_{r\cdot}$ is the $r$-th row of matrix $\mathbf{B}$, $\mathbf{\Theta}_{l\cdot}(t)$ is the $l$-th spline basis, and $\mathbf{P}$ is the $L\times L$ penalty matrix with the element of $l_1$-th row and $l_2$-th column being $\mathbf{P}_{l_1l_2} = \int [\mathbf{\Theta}_{l_1\cdot}''(t)]^T[\mathbf{\Theta}_{l_2\cdot}''(t)]dt$. As shown in \cite{jiang2025tutorial}, incorporating the penalty $\int (u_r''(t))^2dt$ into the log-likelihood is equivalent to imposing a multivariate normal prior $p(\mathbf{B}_{r\cdot}) \propto \exp \bigg(-\frac{\mathbf{B}_{r\cdot}^T \mathbf{P} \mathbf{B}_{r\cdot}}{2\sigma_B^2}\bigg)$
on each row of the spline coefficients matrix $\mathbf{B}$, where $\sigma_B^2$ is the smoothing parameter. 

For ease of notation, 
we vectorize $\mathbf{X} = \mathbf{V} \mathbf{B} \mathbf{\Theta} + \mathbf{E}$, which gives
\begin{equation}\label{eq: basis_expansion_model}
    \begin{split}
        \textnormal{Vec}(\mathbf{X}) &=  \textnormal{Vec}(\mathbf{V} \mathbf{B} \mathbf{\Theta}) +  \textnormal{Vec}(\mathbf{E}) = (\mathbf{\Theta}^T \otimes \mathbf{V})\textnormal{Vec}(\mathbf{B})+  \textnormal{Vec}(\mathbf{E})\;.
    \end{split}
\end{equation}
Given the vectorized model \eqref{eq: basis_expansion_model}, the prior for $\textnormal{Vec}(\mathbf{B})$ becomes
\begin{equation}\label{eq:priorforB}
    p(\textnormal{Vec}(\mathbf{B})\mid\sigma_B^2)\propto (\sigma_B^2)^{-qR/2}\exp\left(- \frac{1}{2\sigma_B^2}\textnormal{Vec}(\mathbf{B})^T(\mathbf{P}\otimes\mathbf{I}_R)\textnormal{Vec}(\mathbf{B})\right)\;,
\end{equation}
where $\mathbf{I}_R$ is the $R\times R$ identity matrix and $q=\textnormal{rank}(\mathbf{P})=L-2$ for the second-derivative penalty. { Because $\mathbf{P}$ is rank deficient, this prior is partially improper---it is flat on the null space of $\mathbf{P}\otimes\mathbf{I}_R$. In the Supplementary Material Section S1.3, we present a reparametrization to decompose the spline coefficients into unpenalized fixed effects and penalized random effects. For the remaining parameters we adopt proper priors $\mathbf{V}_{ir}\stackrel{\textnormal{iid}}{\sim} N(0,\sigma_V^2)$, $\sigma^2\sim\textnormal{IG}(a_\sigma,b_\sigma)$, and $\sigma_B^2\sim\textnormal{IG}(a_B,b_B)$, with $\sigma_V^2=10^6$ and $a_\sigma=b_\sigma=a_B=b_B=0.01$ by default.}
The final Bayesian model for functional data imputation becomes
\begin{equation}\label{eq:single.level.model}
    \begin{cases}
      \qquad \textnormal{Vec}(\mathbf{X}) = (\mathbf{\Theta}^T \otimes \mathbf{V})\textnormal{Vec}(\mathbf{B})+  \textnormal{Vec}(\mathbf{E})\;;\\
      \qquad \textnormal{Vec}(\mathbf{E})\sim N(\boldsymbol{0}, \sigma^2\mathbf{I}_{NK})\;; \\
      \qquad \textnormal{Vec}(\mathbf{B}) \sim p(\textnormal{Vec}(\mathbf{B})),\; \;\; \mathbf{V} \sim p(\mathbf{V}),\;\;\; \sigma^2\sim p(\sigma^2),\;\textnormal{and} \;\;\; \sigma_B^2\sim p(\sigma_B^2)\;.
    \end{cases}       
\end{equation}

Here $p(\cdot)$ represents the prior for each of the parameters. For model \eqref{eq:single.level.model}, we treat $\sigma_B^2$ as an unknown parameter and obtain posterior samples accordingly. 
The full conditional posterior distributions under this prior specification are derived in the Supplementary Material Section S1. 
We also present an alternative approach that selects $\sigma_B^2$ via cross-validation (CV) { in the Supplementary Material Section S4}. Given the substantially higher computational cost with CV, we recommend estimating $\sigma_B^2$ within the likelihood-based Bayesian model \eqref{eq:single.level.model}. 

Although the prior on $\mathbf{B}$ is partially improper, our Theorem \ref{the1} below establishes that the resulting posterior distribution is proper under a mild, checkable condition.

\vspace{-2ex}
\begin{assumption}\label{ass:supp-pattern}
There exists a subset $\mathcal{S}\subseteq\{1,\dots,N\}$ with
$|\mathcal{S}|\geq R+2$ such that every subject $i\in\mathcal{S}$ is
observed at two or more distinct time points; that is,
$\#\{t_k:\mathcal{O}_{ik}=1\}\geq 2$.
\end{assumption}

\vspace{-2ex}
Assumption~\ref{ass:supp-pattern} is weak and readily checkable: at least $R+2$
subjects must have two or more observations, and these time points need not be common across subjects. Under this condition, Theorem \ref{the1} establishes the propriety of the posterior distribution in model \eqref{eq:single.level.model}. We defer the proof to the Supplementary Material Section S2. 

\vspace{-2ex}
\begin{theorem}\label{the1}
Consider model \eqref{eq:single.level.model} with aforementioned
priors, where
$a_\sigma,b_\sigma,a_B,b_B>0$ and $\sigma_V^2>0$.  Under
Assumption~\ref{ass:supp-pattern}, the posterior distribution of
$(\mathbf{V},\mathbf{B},\sigma^2,\sigma_B^2)$ given the observed data
$\mathbf{X}_{\mathrm{o}}=\{\mathbf{X}_{ik}:\mathcal{O}_{ik}=1\}$ is proper, i.e.,
the observed-data marginal likelihood satisfies $0<m(\mathbf{X}_{\mathrm{o}})<\infty$.
\end{theorem}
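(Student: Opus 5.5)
The plan is to show $0<m(\mathbf{X}_{\mathrm{o}})<\infty$ by integrating the parameters out one at a time, in an order chosen so that every intermediate bound is explicit. Positivity is immediate, because the integrand is strictly positive on a set of positive measure. Finiteness is the real content.

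\textbf{Step 1: reparametrize $\mathbf{B}$.} I will use the reparametrization of Supplementary Section S1.3. Write each row as $\mathbf{B}_{r\cdot}=\mathbf{F}\boldsymbol\beta_r+\mathbf{Z}\boldsymbol\gamma_r$. Here the $L\times 2$ matrix $\mathbf{F}$ spans the null space of $\mathbf{P}$, i.e.\ the spline coefficients of linear functions $a+bt$. The matrix $\mathbf{Z}$ is chosen so that $\mathbf{B}_{r\cdot}^T\mathbf{P}\mathbf{B}_{r\cdot}=\|\boldsymbol\gamma_r\|^2$. After this change of variables (constant Jacobian), the prior is flat in $\boldsymbol\beta=(\boldsymbol\beta_1,\dots,\boldsymbol\beta_R)\in\mathbb{R}^{2R}$ and proportional to $(\sigma_B^2)^{-qR/2}\exp(-\|\boldsymbol\gamma\|^2/2\sigma_B^2)$ in $\boldsymbol\gamma\in\mathbb{R}^{qR}$.

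\textbf{Step 2: integrate out $\sigma_B^2$.} Against the $\textnormal{IG}(a_B,b_B)$ prior this gives a factor proportional to $(b_B+\|\boldsymbol\gamma\|^2/2)^{-(a_B+qR/2)}$. This is a multivariate-$t$ kernel in $\boldsymbol\gamma$ and is integrable on $\mathbb{R}^{qR}$ because $a_B>0$. So $\boldsymbol\gamma$ effectively receives a proper prior, just as $\mathbf{V}$ already has one. Only $\boldsymbol\beta$ and $\sigma^2$ remain to be controlled.

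\textbf{Step 3: integrate out $\boldsymbol\beta$, then $\sigma^2$.} Fix $\mathbf{V},\boldsymbol\gamma,\sigma^2$. The observed-data likelihood is Gaussian and linear in $\boldsymbol\beta$: the mean of $\mathbf{X}_{ik}$ contains $\mathbf{V}_{i\cdot}(\mathbf{I}_R\otimes(1,t_k))\boldsymbol\beta$. Let $\mathbf{D}$ be the corresponding $n_o\times 2R$ design matrix, where $n_o=\sum_{ik}\mathcal{O}_{ik}$. If $\mathbf{D}^T\mathbf{D}$ is nonsingular, integrating over $\mathbb{R}^{2R}$ gives
\[
(2\pi\sigma^2)^{R}\det(\mathbf{D}^T\mathbf{D})^{-1/2}(2\pi\sigma^2)^{-n_o/2}\exp\!\big(-\mathrm{RSS}/2\sigma^2\big),
\]
where $\mathrm{RSS}\ge 0$ is the residual sum of squares after projecting out $\boldsymbol\beta$. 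Integrating against the $\textnormal{IG}(a_\sigma,b_\sigma)$ prior then gives
\[
(b_\sigma+\mathrm{RSS}/2)^{-(a_\sigma+(n_o-2R)/2)}\le b_\sigma^{-(a_\sigma+(n_o-2R)/2)}.
\]
This bound holds uniformly in $\mathbf{V}$ and $\boldsymbol\gamma$. It needs $n_o>2R$, which the assumption guarantees since $n_o\ge 2(R+2)$. Everything therefore reduces to showing
\[
\int \det(\mathbf{D}^T\mathbf{D})^{-1/2}\,p(\mathbf{V})\,d\mathbf{V}<\infty .
\]

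\textbf{Step 4 (main obstacle): control the singularity of $\det(\mathbf{D}^T\mathbf{D})^{-1/2}$ near rank-deficient $\mathbf{V}$.} The structure helps here. We have
\[
\mathbf{D}^T\mathbf{D}=\sum_i(\mathbf{V}_{i\cdot}^T\mathbf{V}_{i\cdot})\otimes\mathbf{M}_i,\qquad \mathbf{M}_i=\sum_{k:\mathcal{O}_{ik}=1}(1,t_k)^T(1,t_k).
\]
For every $i\in\mathcal{S}$ in Assumption~\ref{ass:supp-pattern}, subject $i$ has two distinct observation times, so $\mathbf{M}_i\succeq c\,\mathbf{I}_2$ with $c>0$ depending only on the design. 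Dropping the nonnegative terms with $i\notin\mathcal{S}$ gives
\[
\mathbf{D}^T\mathbf{D}\succeq c\,(\mathbf{V}_\mathcal{S}^T\mathbf{V}_\mathcal{S})\otimes\mathbf{I}_2,
\qquad\text{so}\qquad
\det(\mathbf{D}^T\mathbf{D})^{-1/2}\le c^{-R}\det(\mathbf{V}_\mathcal{S}^T\mathbf{V}_\mathcal{S})^{-1}.
\]
Under the prior, $\mathbf{V}_\mathcal{S}^T\mathbf{V}_\mathcal{S}\sim\sigma_V^2\,\mathrm{Wishart}_R(|\mathcal{S}|,\mathbf{I})$. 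I will invoke the standard Wishart moment formula: $\mathbb{E}[\det W^{-s}]<\infty$ iff $s<(n-R+1)/2$. With $s=1$ this requires $|\mathcal{S}|\ge R+2$, which is exactly Assumption~\ref{ass:supp-pattern}. This also shows that $\mathbf{D}^T\mathbf{D}$ is nonsingular for almost every $\mathbf{V}$, which justifies Step 3.

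\textbf{Step 5: assemble.} Combining the uniform bound from the $\sigma^2$ integral, the integrable $t$-kernel in $\boldsymbol\gamma$, and the finite Wishart moment gives $m(\mathbf{X}_{\mathrm{o}})<\infty$. The delicate point is Step 4: the fact that the flat directions form only the two-dimensional null space (linear trends) is what makes the required exponent $s=1$. That, in turn, is what matches the threshold $R+2$.
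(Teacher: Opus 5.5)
Your proposal is correct and follows essentially the same route as the paper's proof. Both arguments split $\mathbf{B}$ into flat null-space coefficients and penalized coefficients, integrate out the flat part to produce $\det(\mathbf{D}^T\mathbf{D})^{-1/2}$, bound the $\sigma^2$ integral by a constant, and control $\det(\mathbf{D}^T\mathbf{D})^{-1/2}\le c^{-R}\det(\mathbf{V}_{\mathcal{S}}^T\mathbf{V}_{\mathcal{S}})^{-1}$ through the Kronecker minorization and the Wishart moment $\mathbb{E}[\det(\mathbf{V}_{\mathcal{S}}^T\mathbf{V}_{\mathcal{S}})^{-1}]<\infty$ for $|\mathcal{S}|\ge R+2$. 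The remaining differences are minor: you marginalize $\sigma_B^2$ first to get a proper multivariate-$t$ law on the penalized coefficients, whereas the paper integrates them against their conditional Gaussian prior and bounds the result by one; you use the $(1,t_k)$ basis directly rather than $\mathbf{\Theta}_0$; and you also state the positivity $m(\mathbf{X}_{\mathrm{o}})>0$, which the paper leaves implicit.
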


\vspace{-3ex}
\subsection{Gibbs sampler with full data}\label{sec:3.2}
We now provide full conditionals for the Gibbs sampling procedure given the fully observed functional data $\{X_i(t_k)\}_{i=1,...,N;k=1,...,K}$. %
For model \eqref{eq: matrix_model_iidGaussian} (and equivalently \eqref{eq: basis_expansion_model}), the posterior distributions used in the Gibbs sampler, conditional on the full data and rank $R$, are given below.  For detailed derivations, see the Supplementary Material Section S1. 

\begin{itemize}
    \item Given $\mathbf{U}^T=\mathbf{B}\mathbf{\Theta}$, $\sigma^2$, and $\sigma_B^2$, for $i=1,\ldots,N$, draw the $i$-th row $\boldsymbol{v}_{i\cdot}$ of $\mathbf{V}$ from
    \begin{equation*}
       \boldsymbol{v}_{i\cdot}^T \sim N\!\left(\tfrac{1}{\sigma^2}\boldsymbol{\Sigma}_v\mathbf{U}^T\mathbf{X}_{i\cdot}^T,\;\; \boldsymbol{\Sigma}_v\right),\qquad \textnormal{where} \;\; \boldsymbol{\Sigma}_v=\Bigl(\tfrac{1}{\sigma^2}\mathbf{U}^T\mathbf{U}+\tfrac{1}{\sigma_V^2}\mathbf{I}_R\Bigr)^{-1}.
    \end{equation*}
    
    \item Given $\mathbf{V}$, $\sigma^2$, and $\sigma_B^2$, draw $\textnormal{Vec}(\mathbf{B})$ following $\textnormal{Vec}(\mathbf{B}) \sim N(\boldsymbol{\mu}_B, \mathbf{\Sigma}_{B})$
    where
    \begin{equation*}
        \mathbf{\Sigma}_B = \bigg(\frac{1}{\sigma^2}(\mathbf{\Theta}^T \otimes \mathbf{V})^T(\mathbf{\Theta}^T \otimes \mathbf{V})+\frac{1}{\sigma_B^2}(\mathbf{P}\otimes\mathbf{I}_R)\bigg)^{-1}
    \end{equation*}
    \begin{equation*}
        \boldsymbol{\mu}_B = \frac{1}{\sigma^2} \mathbf{\Sigma}_B (\mathbf{\Theta}^T \otimes \mathbf{V})^T\textnormal{Vec}(\mathbf{X})
    \end{equation*}
    
    \item Given $\mathbf{V}$, $\mathbf{U}^T = \mathbf{B} \mathbf{\Theta}$, and $\sigma_B^2$, draw $\sigma^2$ via the inverse-gamma full conditional 
    \begin{equation*}
        \sigma^2\sim \textnormal{IG}\left(\tfrac{N K}{2}+a_\sigma,\;\; \tfrac12||{\mathbf{X}}-\hat{{\mathbf{X}}}||^2_{F}+b_\sigma\right)
    \end{equation*}
    where $\hat{{\mathbf{X}}}=\mathbf{V}\mathbf{U}^T$ is calculated using the values of $\mathbf{V}$ and $\mathbf{U}=(\mathbf{B} \mathbf{\Theta})^T$ drawn in the previous steps, and $||\cdot||_F$ is the Frobenius norm.

    \item Given $\mathbf{B}$, draw $\sigma_B^2$ via inverse-gamma full conditional 
    \begin{equation*}
        \sigma_B^2\sim \textnormal{IG}\left(\tfrac{qR}{2}+a_B,\;\; \tfrac{1}{2}(\textnormal{Vec}(\mathbf{B}))^T(\mathbf{P}\otimes\mathbf{I}_R)\textnormal{Vec}(\mathbf{B})+b_B\right).
    \end{equation*}
\end{itemize}

\begin{algorithm}
	\caption{Bayesian multiple imputation for single functional data} 
 \label{alg1}
	\begin{algorithmic}[1]
        \State \textbf{Input}: observed functional data $\{\mathbf{X}, \mathcal{O}=1\}$; prespecified rank $R$; spline basis matrix $\mathbf{\Theta}$ and the penalty matrix $\mathbf{P}$.
        \State Run the frequentist \texttt{face.sparse} algorithm to $\{\mathbf{X}, \mathcal{O}=1\}$ with number of principal component equals $R$ and get the estimated eigenfunctions $\hat{\mathbf{U}}_{\textnormal{init}}$ and scores $\hat{\mathbf{V}}_{\textnormal{init}}$. 
        \State Set the initial value of $\mathbf{U}$ as $\hat{\mathbf{U}}_{\textnormal{init}}$ and $\mathbf{V}$ as $\hat{\mathbf{V}}_{\textnormal{init}}$. Impute the missing elements of $\{\mathbf{X}, \mathcal{O}=0\}$ using $\mathbf{V}\mathbf{U}^T$. Set the initial value of $\sigma^2$ equals $1$.

        \For {$s=1,...,S$-th MCMC iteration}
            \State Sample $\hat{\mathbf{V}}^s$, $\hat{\mathbf{B}}^s$, $(\hat{\sigma}^2)^{s}$, and $(\hat{\sigma}_B^2)^s$ with the conditional posterior distributions.
            \State Calculate $(\hat{\mathbf{U}}^s)^T = \hat{\mathbf{B}}^s\mathbf{\Theta}$.
            \State Calculate the underlying structure of ${\mathbf{X}}$ as $\Tilde{{\mathbf{X}}}^s = \hat{\mathbf{V}}^s(\hat{\mathbf{U}}^s)^T$.
            \State Impute the missing value $\{\hat{{\mathbf{X}}}^s, \mathcal{O}=0\}$ following the Gaussian distribution with mean $\Tilde{{\mathbf{X}}}^s$ and variance $(\hat{\sigma}^2)^{s}$.
            \State Rescale $\hat{\mathbf{U}}^s$ and $\hat{\mathbf{V}}^s$ such that each eigenfunction has unit norm.
        \EndFor		
        \State \textbf{Output}: Bayesian posterior draws of the imputed functional data $\{\hat{{\mathbf{X}}}^{1},...,\hat{{\mathbf{X}}}^S\}$.
	\end{algorithmic} 
\end{algorithm}

\vspace{-2ex}
\subsection{Bayesian functional imputation algorithm}
The proposed BAMIFun algorithm is presented in Algorithm~\ref{alg1}. The algorithm requires a prespecified rank $R$.
We initialize $\mathbf{U}$ using the first $R$ eigenfunctions obtained from a frequentist FPCA, and we initialize $\mathbf{V}$ using the corresponding scores. These initial values are chosen only to facilitate faster convergence of the Bayesian algorithm; alternative initial values may also be used without affecting the validity of the method. For each MCMC iteration $s = 1, \ldots, S$, the algorithm sequentially samples $\hat{\mathbf{V}}^s$, $\hat{\mathbf{B}}^s$, $(\sigma^2)^s$, and $(\sigma_B^2)^s$ from their full conditional posterior distributions derived in Section \ref{sec:3.2}. The $\mathbf{U}$ matrix is then computed as $(\hat{\mathbf{U}}^s)^T = \hat{\mathbf{B}}^s \mathbf{\Theta}$, and the underlying smooth structure of $\mathbf{X}$ is constructed as $\tilde{\mathbf{X}}^s = \hat{\mathbf{V}}^s(\hat{\mathbf{U}}^s)^T$. The missing entries of $\mathbf{X}$ are imputed using the corresponding entries of $\hat{\mathbf{X}}^s$, where each element of $\hat{\mathbf{X}}^s$ follows a Gaussian distribution with mean $\tilde{\mathbf{X}}^s$ and variance $(\hat{\sigma}^2)^s$ at the $s$-th iteration. The imputed data are then treated as ``observed'' in the next MCMC iteration. To improve numerical stability, we further rescale the matrices $\hat{\mathbf{U}}^s$ and $\hat{\mathbf{V}}^s$ with $\hat{\mathbf{U}}^s = \hat{\mathbf{U}}^s \mathbf{D}$ and $\hat{\mathbf{V}}^s = \hat{\mathbf{V}}^s \mathbf{D}^{-1}$, where $\mathbf{D}$ is an $R\times R$ diagonal matrix such that each column of the rescaled $\hat{\mathbf{U}}^s$ has unit norm. The algorithm yields a collection of posterior draws of the imputed functional data, ${\hat{\mathbf{X}}^{1}, \ldots, \hat{\mathbf{X}}^S}$, which naturally capture the uncertainty appropriate for the multiple imputation. 
Since Algorithm \ref{alg1} requires a pre-specified rank $R$, we propose to select it using cross-validation. 
 The default parameter grid {is} set to $R\in \{R_{\textnormal{freq}}-2, R_{\textnormal{freq}}-1, R_{\textnormal{freq}}\}$ for cross-validation search; users may change the grid based on their specific data structure. The cross-validation algorithm is provided in the Supplementary Material Section S3.

\vspace{-3ex}  
\subsection{Statistical inference for imputation and downstream analysis}\label{sec:3.4}
For subject $i$ at time point $t_k$ with $\mathcal{O}_{ik}=0$, we propose to construct the $95\%$ credible interval for the functional observations as $[\hat{X}^{\textnormal{L}}_{i}(t_k), \hat{X}^{\textnormal{U}}_{i}(t_k)]$ where $\hat{X}^{\textnormal{U}}_{i}(t_k)$ is the $0.975$ quantile of the posterior samples $\{\hat{{X}}^{1}_i(t_k),\ldots,\hat{{X}}^{S}_i(t_k)\}$ and $\hat{X}^{\textnormal{L}}_{i}(t_k)$ is the $0.025$ quantile of the posterior samples. 
For downstream analysis, we take the Scalar-on-Function Regression (SoFR) as an example, but our methods are generally applicable to other statistical analyses. The point estimate $\hat{\beta}(t)$ for the functional coefficient can be obtained by pooling estimates from separate SoFR models. 
The confidence interval is then constructed following the Rubin's rule \citep{rubin1996multiple}. Specifically, for each posterior draw of the imputed dataset $\hat{{\mathbf{X}}}^{s}$, running SoFR provides the estimated functional coefficient $\hat{\beta}^s(t)$ and the corresponding variance $\hat{V}^s(t)$. We then calculate the within-imputation variance as $\hat{V}_W(t) = \frac{1}{S}\sum_{s=1}^S \hat{V}^s(t)$
and the between-imputation variance as $\hat{V}_B(t) = \frac{1}{S-1}\sum_{s=1}^S (\bar{\beta}(t)-\hat{\beta}^s(t))^2$,
where $\bar{\beta}(t) = \frac{1}{S}\sum_{s=1}^S \hat{\beta}^s(t)$ is the pooled estimate of the coefficient. The total variance at time $t$ becomes: $\hat{V}_T(t) = \hat{V}_W(t) +(1+\frac{1}{S})\hat{V}_B(t)$.
The 95\% confidence interval for $\beta(t)$ is
$\bar{\beta}(t)\pm t_{\nu,0.975}\sqrt{\hat{V}_T(t)}$, where $t_{\nu,0.975}$ is the
0.975 quantile of the $t$-distribution with degrees of freedom
$\nu = (S-1)\left(1 + \dfrac{\hat{V}_W(t)}{(1+\frac{1}{S})\hat{V}_B(t)}\right)^2$.

\vspace{-4ex}
\section{Multiway functional data}\label{sec:4}
The multiway functional data $\mathcal{X} = \{X_{ij}(t_k)\}_{i=1,\ldots,N;\, j=1,\ldots,J;\, k=1,\ldots,K}$ can be represented as a tensor of dimension $N\times J\times K$. We adopt the FTSVD model proposed by \cite{han2024guaranteed}, which approximates the multiway functional data through the decomposition
\begin{equation}\label{eq:multilevel_model}
    \mathcal{X} = \sum_{r=1}^R \lambda_r \mathbf{v}_r \circ \mathbf{w}_r \circ \mathbf{u}_r + \mathcal{E}\;.
\end{equation}
For each $r=1,\ldots,R$, $\mathbf{v}_r$ and $\mathbf{w}_r$ are unit-norm vectors of length $N$ and $J$ corresponding to the subject and feature dimensions, respectively. $\mathbf{u}_r = u_r(t)$, $t=t_1,\ldots,t_K$, represents the functional mode on which we impose a smoothness constraint. Model~\eqref{eq:multilevel_model} is similar to the CP decomposition of tensor data proposed in \cite{jiang2025bamita}, except that a smoothness constraint is imposed on $\mathbf{u}_r$.
Unlike \citet{han2024guaranteed}, which develops a frequentist estimation procedure for the FTSVD model, we propose a Bayesian sampling algorithm that facilitates simultaneous parameter estimation and multiple imputation of sparsely observed functional trajectories.
We emphasize that the FTSVD model for the multiway functional data is different from the multilevel FPCA (MFPCA) model in \citet{di2009multilevel}; see detailed discussion in the Supplementary Material Section S5.1.

Define the matrix $\mathbf{V}=[\mathbf{v}_1\;\;\mathbf{v}_2\;\cdots\;\mathbf{v}_R]$, $\mathbf{W}=[\mathbf{w}_1\;\;\mathbf{w}_2\;\cdots\;\mathbf{w}_R]$, $\mathbf{U}=[\mathbf{u}_1\;\;\mathbf{u}_2\;\cdots\;\mathbf{u}_R]$, and the notation $\llbracket \mathbf{V}, \mathbf{W}, \mathbf{U}\rrbracket \coloneqq \sum_{r=1}^R \mathbf{v}_r\circ \mathbf{w}_r\circ \mathbf{u}_r$. We can re-express model \eqref{eq:multilevel_model} in matrix form as $\mathcal{X}_{(3)} = \mathbf{U}\mathbf{\Lambda} (\mathbf{W}\odot\mathbf{V})^T + \mathcal{E}_{(3)}$ \citep{jiang2025bamita},
where $\mathcal{X}_{(3)}$ is the mode-3 matricization of tensor $\mathcal{X}$ with dimension $K\times NJ$, $\mathbf{\Lambda}$ is a diagonal matrix of $(\lambda_1,...,\lambda_R)$, 
and each element of $\mathcal{E}_{(3)}$ follows an i.i.d. Gaussian distribution with mean $0$ and variance $\sigma^2$. 
In the MCMC sampling, we enforce $\mathbf{v}_r$, $\mathbf{w}_r$, and $\mathbf{u}_r$ to have the same norm of $\lambda_r^{1/3}$, which makes $\mathbf{\Lambda}=\mathbf{I}_R$ where $\mathbf{I}_R$ is the identity matrix with dimension $R$. Thus, the tensor matricization $\mathcal{X}_{(3)}$ becomes $\mathcal{X}_{(3)} = \mathbf{U} (\mathbf{W}\odot\mathbf{V})^T + \mathcal{E}_{(3)}$.
Similarly, matricizing the tensor $\mathcal{X}$ along other dimensions yields $\mathcal{X}_{(1)} = \mathbf{V} (\mathbf{U}\odot\mathbf{W})^T+ \mathcal{E}_{(1)}$ and $\mathcal{X}_{(2)} = \mathbf{W} (\mathbf{U}\odot\mathbf{V})^T+ \mathcal{E}_{(2)}$.
Similar to the single-level functional data, we use basis expansions to model the matrix $\mathbf{U}^T = \mathbf{B\Theta}$ and impose a shrinkage prior on the parameter matrix $\mathbf{B}$ as in \eqref{eq:priorforB}. We place proper priors on the remaining parameters, $\mathbf{V}_{ir}\stackrel{\textnormal{iid}}{\sim}N(0,\sigma_V^2)$, $\mathbf{W}_{jr}\stackrel{\textnormal{iid}}{\sim}N(0,\sigma_W^2)$, $\sigma^2\sim\textnormal{IG}(a_\sigma,b_\sigma)$, and $\sigma_B^2\sim\textnormal{IG}(a_B,b_B)$, mirroring the single-level specification.

The final Bayesian model for multiway functional data imputation becomes
\vspace{-2ex}
\begin{equation}\label{eq:multiway.model}
    \begin{cases}
      \qquad \textnormal{Vec}(\mathcal{X}_{(3)}^T) = (\mathbf{\Theta}^T \otimes (\mathbf{W}\odot\mathbf{V}))\textnormal{Vec}(\mathbf{B})+  \textnormal{Vec}(\mathcal{E}_{(3)}^T)\;;\\
      \qquad \textnormal{Vec}(\mathcal{E}_{(3)}^T)\sim N(0, \sigma^2\mathbf{I}_{NJK})\;; \\
      \qquad \textnormal{Vec}(\mathbf{B}) \sim p(\textnormal{Vec}(\mathbf{B})),\; \;\; \mathbf{W} \sim p(\mathbf{W}),\; \;\; \mathbf{V} \sim p(\mathbf{V}),\;\;\; \sigma^2\sim p(\sigma^2),\;\textnormal{and} \;\;\; \sigma_B^2\sim p(\sigma_B^2).
    \end{cases}       
\end{equation}
We matricize the tensor along the third dimension because it corresponds to the functional observations. The Gibbs sampler for multiway BAMIFun under \eqref{eq:multiway.model} is presented in the Supplementary Material Section~S5.2.

\vspace{-4ex}
\section{Simulation experiments}\label{sec:5}
Across all three simulation experiments in this section, each scenario is evaluated over $500$ replications. We run two separate MCMC chains, each with $1000$ iterations for BAMIFun and $3000$ iterations for BAMITA (to improve convergence of BAMITA), with the first $500$ iterations discarded as burn-in. To assess convergence, we compute the potential scale reduction factor (PSRF) \citep{gelman1992inference} on the element-wise low-rank reconstruction; PSRF values close to $1$ indicate successful convergence. 
The complete PSRF summaries and per-replication runtimes are reported in the Supplementary Material Section S6. 

\subsection{Imputation performance for single-level functional data}
\subsubsection{Simulation settings}
In the first simulation experiment, we evaluate the imputation performance for the single-level functional data. We generate functional data $\{X_i(t),t\in \{ \frac{1}{K},...,\frac{K}{K}\}\}$ with $K=100$ from $ X_i(t)  = \sum_{h=1}^{12}\xi_{ih}u_h(t) +\epsilon_i(t)$,
where $u_h$ are the 12 eigenfunctions obtained from our NHANES data application, and $\xi_{ih}\sim N(0, \lambda_h)$ are the scores with eigenvalues $\{\lambda_h\}_{h=1}^{12} = [2,2,2,1,1,1,0.5,0.5,0.5,$ $0.1,0.1,0.1]$. We draw each $\epsilon_i(t)$ independently from a standard Gaussian distribution. For each subject $i=1,...,N$ where $N\in\{100, 300, 500, 1000\}$, we randomly set a proportion $\rho \in\{ 0.8, 0.9, 0.95\}$ of their functional observations as missing ({corresponding to $20$, $10$, and $5$ observations per subject respectively}). %

For each simulation scenario, we compare our BAMIFun algorithm with (1) the PACE algorithm implemented using the state-of-the-art fast sparse FPCA \citep{xiao2018fast}, and (2) the Bayesian multiple imputation algorithm for tensor array (BAMITA) \citep{jiang2025bamita}. In all experiments, we set the proportion of variance explained (pve) as 0.99 for PACE. For BAMIFun, we use the B-spline basis of dimension $L=15$ to model the functional space.  As is standard in penalized splines, $L$ only needs to be sufficiently large to provide a flexible representation of the underlying functions, while the roughness penalty controls the effective smoothness of the fitted curves. Once $L$ is sufficiently large, further increases in $L$ have little impact on the fitted functions \citep{ruppert2002selecting}. 
The tuning parameter $R$ {is} selected through CV over the default parameter grid, { where} $R_{\textnormal{freq}}$ is the number of principal components estimated through FPCA. 
We evaluate the imputation performance for the single-level functional data in terms of: (1) the relative mean squared error (RMSE) for the imputation $\frac{\sum_{\mathcal{O}=0}((\textbf{X}-\hat{\textbf{X}})^2)}{\sum_{\mathcal{O}=0}(\textbf{X}^2) }$,
where $\hat{\textbf{X}}$ is the imputed functional data, and $\mathcal{O}=0$ indicates unobserved points; and (2) the coverage rate for the estimated 95\% confidence/credible intervals for the imputed functional data. The confidence interval for PACE is calculated using the estimated standard error from \texttt{face::face.sparse()}.

\vspace{-3ex} 
\subsubsection{Results}
Figure \ref{fig:impu} reports the imputation RMSE and coverage rates across methods. For RMSE, BAMIFun performs comparably to PACE when there are 20 observations per subject (80\% missingness). At 10 observations per subject (90\% missingness), it shows slightly larger {RMSE} for small sample sizes ($N=100$), but the difference diminishes as $N$ increases. Under extremely high missingness (5 observations per subject), BAMIFun initially has a higher {RMSE} than PACE; however, the gap narrows with larger sample sizes. Across all settings, incorporating the smoothness constraint allows BAMIFun to consistently outperform BAMITA  in terms of imputation accuracy. 
For 95\% confidence/credible intervals, BAMIFun maintains a coverage rate close to nominal, which is substantially higher than PACE. BAMITA exhibits under-coverage when the sample size is small, likely due to its larger RMSE and poor convergence (see the PSRF figure in the Supplementary Material Section S6.1). %
In the Supplementary Material Section S6.2, we present results using a more informative $\textnormal{IG}(2,1)$ prior for the smoothing parameter $\sigma_B^2$, which are similar to those under the default $\textnormal{IG}(0.01,0.01)$ prior.

In the Supplementary Material Section S6.3, we present an additional experiment that fixes the number of observations per subject at $10$ while refining the grid from {$K=50$ to $K=400$}. As {$K$} grows, BAMITA's imputation RMSE and coverage rate deteriorate sharply, as it treats time as an unordered mode and cannot borrow information from neighboring points. In contrast, the RMSE and coverage of BAMIFun and PACE remain largely unchanged.

\begin{figure}[ht] %
\centering
\includegraphics[width=\textwidth]{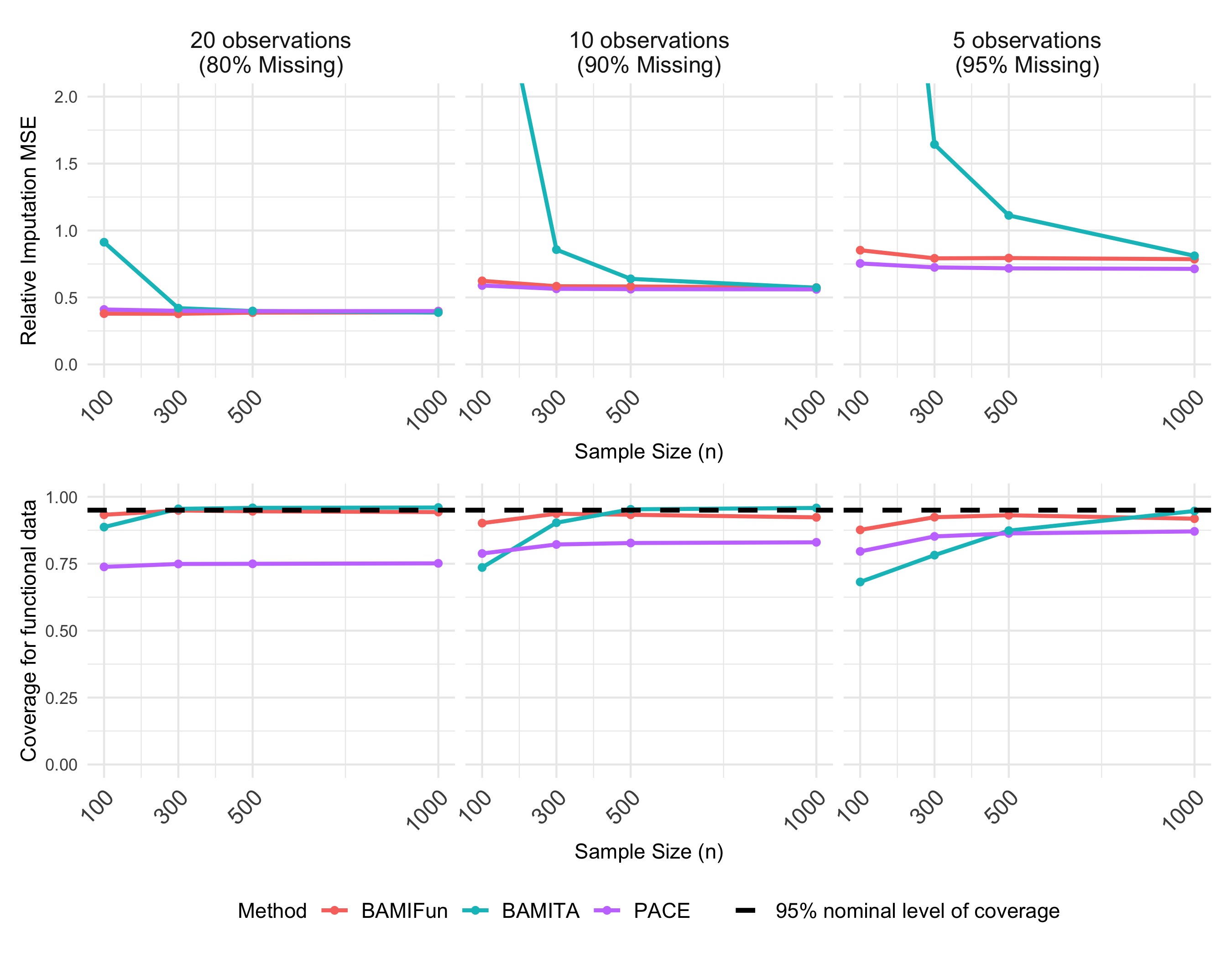}
\caption{Simulation results for the imputation performance of single-level functional data. The top panel displays the relative imputation {RMSE}, and the bottom panel displays the coverage rate. We vary the missing proportion and sample size for the data-generating mechanism.}
\label{fig:impu}
\end{figure}

\subsection{Downstream analysis with imputed functional data}

\subsubsection{Simulation settings}
In the second simulation experiment, we compare imputation algorithms in terms of the inferential performance in downstream analysis using the imputed data. We generate functional data following the same mechanism as in the first experiment. 
We further generate an outcome variable that follows a Gaussian distribution with variance 1 and mean $\mathbb{E}[Y_i] = \int_0^1 \beta(t)X_i(t)dt$,
where $\beta(t) = -5t^2+5t+0.17$, $t\in[0, 1]$ and $X_i(t)$ is the complete simulated functional data. 

For downstream analysis, we first impute the functional data using each of the three methods (PACE, BAMITA, BAMIFun).
We then fit a scalar-on-function regression (SoFR) model on the imputed functional data using the R package \texttt{mgcv}. For each SoFR model, pointwise 95\% confidence intervals for $\hat{\beta}(t)$ are constructed based on a nonparametric bootstrap ($B=100$).
For PACE, we report coverage of the pointwise $2.5\%$ and $97.5\%$ bootstrap quantiles. For BAMIFun and BAMITA, the $S$ sets of estimates are pooled using Rubin's rules described in Section~\ref{sec:3.4} to construct 95\% pointwise confidence intervals.
We calculate the relative integrated squared error (RISE) as $\frac{\int_0^1 (\hat{\beta}(t)-\beta(t))^2dt}{\int_0^1 \beta(t)^2dt}$. We also calculate the empirical coverage rate of the estimated 95\% confidence interval for the functional coefficient.

\vspace{-3ex}
\subsubsection{Simulation results}
Results are presented in Figure~\ref{fig:sofr}. In terms of the RISE, BAMIFun demonstrates good performance across simulation scenarios (except for one fluctuation under $95\%$ missingness). 
With respect to coverage, BAMIFun consistently outperforms PACE, which exhibits systematic undercoverage by ignoring the imputation uncertainty. { Although BAMITA attains comparable coverage, its downstream point estimation is far less stable than BAMIFun: the RISE of the functional coefficient is more variable across replications, with occasional severe failures when data are scarce.}
These findings together underscore the advantage of BAMIFun, which appropriately accounts for imputation uncertainty and leads to more reliable interval estimation in downstream analyses.

\begin{figure}[ht] %
\centering
\includegraphics[width=\textwidth]{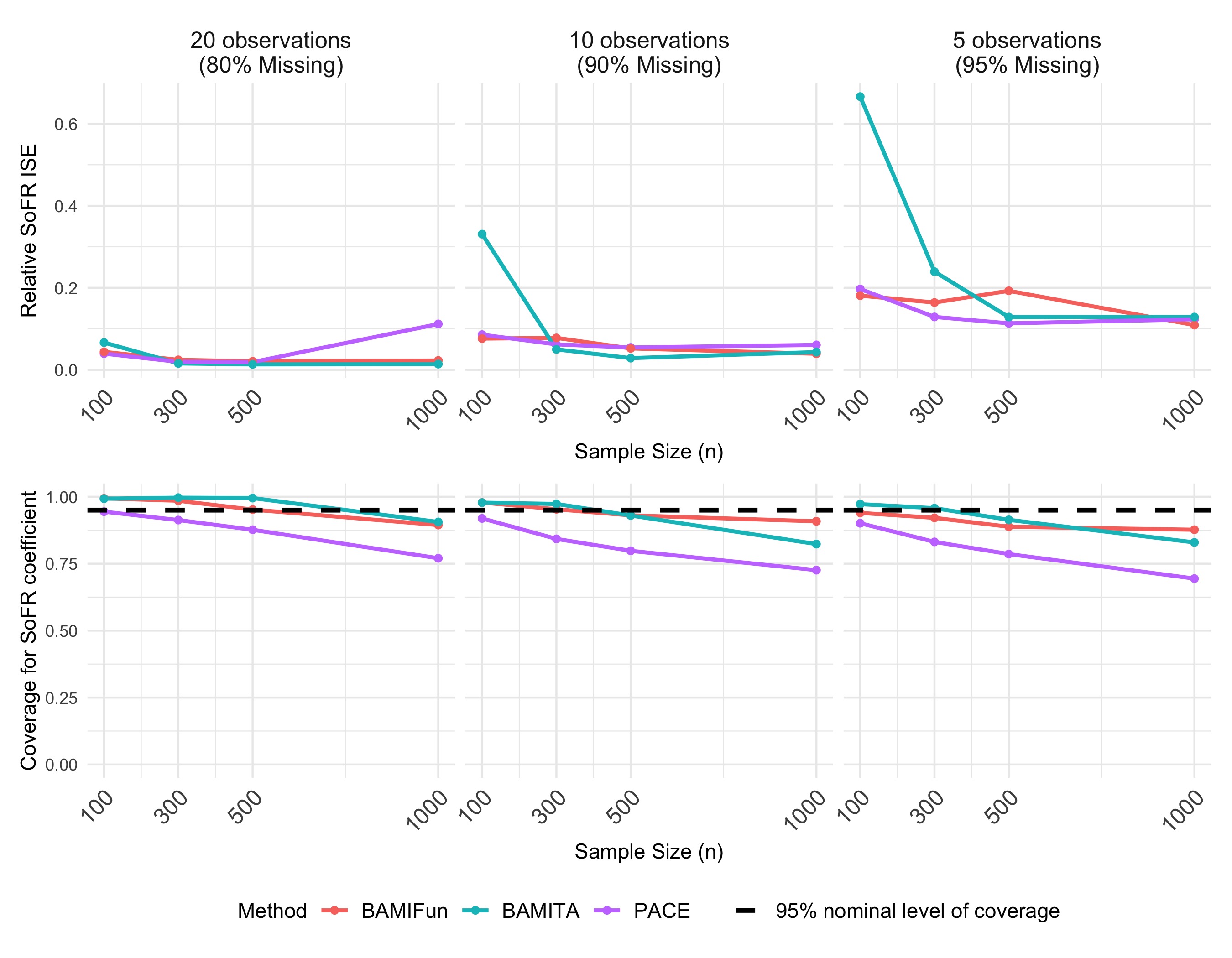}
\caption{Simulation results for the performance of downstream Scalar-on-Function Regression (SoFR) using the imputed single-level functional data. The top panel displays the relative integrated squared error ({RISE}) of the estimated SoFR coefficient, and the bottom panel displays the coverage rate. We vary the missing proportion and sample size for the data-generating mechanism.}
\label{fig:sofr}
\end{figure}

\subsection{Imputation performance for multiway functional data}
\subsubsection{Simulation settings}
In the third simulation experiment, we evaluate BAMIFun for the multiway functional data $\mathcal{X}=\{X_{ij}(t),t\in \{ \frac{1}{K},...,\frac{K}{K}\}, i=1,...,N\}$, with $K=100$ and the number of visits $J=4$. We generate multiway functional data based on whether a low-rank structure exists. For the scenario with a low rank structure, $\mathcal{X}=\sum_{r=1}^4 a_r\circ b_r\circ u_r + \mathcal{E}$,
where $a_r$ and $b_r$ are vectors with length $N$ and $J$ whose elements follow an i.i.d. Gaussian distribution with mean $0$ and variance $\{\sigma_r^2\}_{r=1}^4= \{2, 1, 0.5, 0.1\}$, $u_r$ are the same set of eigenfunctions as in the first simulation study, and the elements in $\mathcal{E}$ follow an i.i.d. Gaussian distribution with mean $0$ and variance $1$. All other settings are the same as in the previous simulation. { In the Supplementary Material, we present results for the scenario without a low rank structure.} 
{
We compare BAMIFun with PACE based on the frequentist MFPCA \citep{di2009multilevel} and with BAMITA \citep{jiang2025bamita}. 
For Bayesian approaches, we compute both the RMSE and the empirical coverage of imputed elements. For PACE, we report only the RMSE since existing R implementations do not provide confidence intervals}.

\subsubsection{Results}

The simulation results are summarized in Table~\ref{tab:c7v2}. Compared with PACE (MFPCA), BAMIFun achieves lower RMSE across all settings. Compared with BAMITA, BAMIFun also attains lower RMSE in most settings, with the largest gains at small samples and under severe missingness. This reflects the stabilizing effect of smoothness constraints in situations where data are scarce. For larger samples, BAMITA and BAMIFun yields similar RMSEs. In terms of coverage rate, BAMIFun maintains coverage close to the nominal level across all settings, whereas BAMITA exhibits under coverage when the sample size is small.

\begin{table}[ht]
\centering
\setlength{\tabcolsep}{6pt}
\caption{Imputation performance on low-rank functional data: relative imputation {RMSE} with the mean 95\% element-wise coverage (in parentheses, \%) for PACE, BAMIFun, and BAMITA, across the number of observed timepoints per subject and the sample size. Coverage is not reported for PACE.}
\label{tab:c7v2}
\begin{tabular}{cc ccc}
\toprule
Sample & Observations & \multicolumn{3}{c}{Method} \\
\cmidrule(lr){3-5}
Size & (Missing) & PACE & BAMIFun & BAMITA \\
\midrule
100 & 20 (80\%) & 0.237($-$) & 0.194(94.6) & 0.200(94.4) \\
 & 10 (90\%) & 0.430($-$) & 0.211(94.5) & 0.253(93.8) \\
 & 5 (95\%) & 0.614($-$) & 0.261(93.9) & 0.782(90.1) \\
\addlinespace
300 & 20 (80\%) & 0.227($-$) & 0.191(94.6) & 0.188(94.6) \\
 & 10 (90\%) & 0.385($-$) & 0.220(94.5) & 0.217(94.4) \\
 & 5 (95\%) & 0.582($-$) & 0.250(94.2) & 0.277(93.7) \\
\addlinespace
500 & 20 (80\%) & 0.233($-$) & 0.200(94.6) & 0.195(94.6) \\
 & 10 (90\%) & 0.382($-$) & 0.219(94.5) & 0.222(94.4) \\
 & 5 (95\%) & 0.581($-$) & 0.255(94.2) & 0.259(93.9) \\
\addlinespace
1000 & 20 (80\%) & 0.218($-$) & 0.196(94.6) & 0.195(94.6) \\
 & 10 (90\%) & 0.373($-$) & 0.217(94.5) & 0.220(94.5) \\
 & 5 (95\%) & 0.565($-$) & 0.241(94.2) & 0.253(94.1) \\
\bottomrule
\end{tabular}
\end{table}

\vspace{-8ex}
\section{Case studies}\label{sec:6}
\subsection{Single-level functional data}
For the single-level case study, we apply BAMIFun to the physical activity data in the National Health and Nutrition Examination Survey (NHANES) and compare the imputation performance against PACE and BAMITA. NHANES is a nationwide program conducted by the U.S. Centers for Disease Control and Prevention to monitor the health and nutritional status of adults and children in the United States \citep{cui2021additive, cui2022semiparametric}. For this analysis, we use the accelerometry data collected from the 2011–2012 and 2013–2014 cycles of NHANES and organized by \citet{crainiceanu2024functional}.

We include participants who were older than 50 years at the time of enrollment with a total of $N=3880$. The functional measurements are Monitor-Independent Movement Summary (MIMS) values, a device-independent measure of activity intensity \citep{john2019open}; the released MIMS data originally had $K=1440$ time points.  For each subject, the measurements are averaged across all available days and then uniformly subsampled to $720$ time points to reduce computational cost. Missingness is introduced by randomly masking $97.5\%, 95\%$, or $90\%$ of the observed time points. We then apply the imputation methods to the resulting incomplete curves and compute the imputation RMSE and coverage for the removed entries for both BAMIFun and PACE. Similar to simulation experiments, for PACE we select the number of principal components by setting \texttt{pve = 0.95}. For BAMIFun, we use the same number of principal components as in PACE.

{We repeat each scenario 100 times and summarize the RMSE and coverage in Table~\ref{tab:casestudy1}. BAMIFun yields a slightly higher RMSE under extreme missingness, but the gap diminishes when the missing proportion decreases to approximately 95\%. However, the inferential performance is notably different. Across all levels of missingness, BAMIFun provides empirical coverage close to the nominal level for the estimated 95\% credible intervals. In contrast, PACE exhibits substantially lower coverage across all settings. A downstream scalar-on-function analysis on this dataset is reported in the Supplementary Material Section~S7.1.}

\begin{table}[ht]
\centering
\caption{\textnormal{Imputation performance with the NHANES dataset for the PACE, BAMIFun, and BAMITA algorithms. For
each method, we report the {RMSE} and the coverage rate (in parentheses)}}
\renewcommand{\arraystretch}{1.1} %
\begin{tabular}{cccc}
\toprule
\textbf{Missing} &
\textbf{PACE} &
\textbf{BAMIFun} &
\textbf{BAMITA}\\
\textbf{Proportion} &
\textbf{RMSE (Coverage)} &
\textbf{RMSE (Coverage)} & 
\textbf{RMSE (Coverage)}\\
\midrule
 97.5\% & 0.140(71.4) & 0.148(92.3) & 0.173(90.1)\\
 95\% & 0.124(61.9) & 0.127(93.4) & 0.137(92.7)\\
 90\% & 0.115(53.4) & 0.117(93.7) & 0.121(93.5)\\
\bottomrule
\end{tabular}
\label{tab:casestudy1}
\end{table}

\vspace{-2ex}
\subsection{Multiway functional data}
The human gut harbors a complex and dynamic microbial ecosystem that evolves rapidly during early life and plays a critical role in immune development and overall health. We analyze data from a longitudinal study of 52 preterm infants admitted to the neonatal intensive care unit (NICU), in which stool samples were repeatedly collected over the first three months of life \citep{cong2017influence}. Sampling times varied across infants, yielding sparsely observed longitudinal profiles with $K=118$.
Microbial composition was quantified using 16S rRNA sequencing, and relative abundances were summarized at the genus level, resulting in 152 unique genera across all samples. The data form a three-way array indexed by subject, microbial genus, and time, motivating a multiway functional representation in which each infant is associated with a collection of genus-specific abundance trajectories over time. The resulting multiway functional dataset has dimensions $52\times 152\times 118$, of which $91.1\%$ of the entries are unobserved due to the study design, posing substantial challenges for imputation. 

We compare the imputation performance between BAMIFun and BAMITA.
Frequentist MFPCA methods are not considered, as existing software implementations cannot accommodate datasets with such an extreme level of missingness.
Because CV has already been assessed in previous simulations and applications, and to reduce computational cost, we do not use CV to select the rank $R$. Instead, we vary $R$ from $10$ to $38$ and present the corresponding results. 
In each repetition, we randomly sample $30\%$ of the observed entries as a test set and apply the imputation methods to the rest of the observed entries. For each $R$, we repeat the experiment $100$ times and report the mean imputation RMSEs over the test set in Figure \ref{fig:app2}. BAMIFun consistently outperforms BAMITA. In addition, BAMIFun achieves coverage rates closer to the nominal level than BAMITA across the values of $R$.

\begin{figure}[ht] %
\centering
\includegraphics[width=\textwidth]{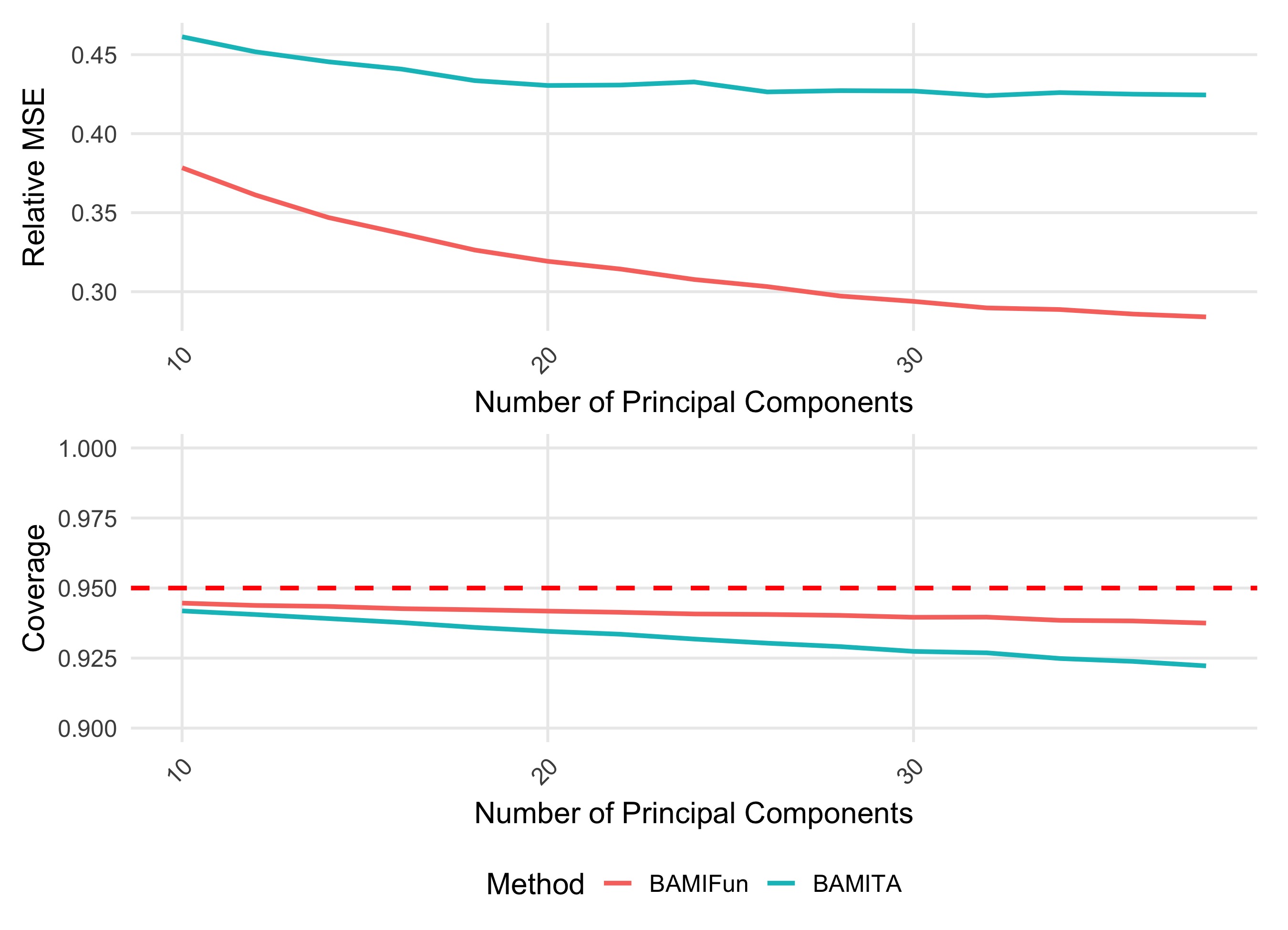}
\caption{Multiple imputation results using the infant gut microbiome dataset. The top panel displays the {relative imputation RMSE} for the imputed elements. In the bottom panel, we display the coverage rate for the imputed elements.}
\label{fig:app2}
\end{figure}

\vspace{-4ex}
\section{Discussion}\label{sec:7}
We developed BAMIFun: a Bayesian multiple imputation framework for sparse functional data, including both single-level and multiway functional settings. BAMIFun provides a principled approach to reconstructing incomplete trajectories while explicitly quantifying the uncertainty arising from the imputation. 
Our method addresses a key limitation of the widely used PACE approaches, that is, their tendency to treat imputed curves as known, thereby underestimating uncertainty in downstream analyses. 
For single-level functional data, BAMIFun achieves similar imputation accuracy compared with PACE. However, by effectively accounting for the imputation uncertainty, BAMIFun achieves nominal coverage and provides much better inferential performance for downstream analysis. 
For multiway functional data, we adopted the FTSVD model \citep{han2024guaranteed} and proposed a novel Bayesian imputation algorithm. Since this model differs from the MFPCA model, the imputation accuracy for the two methods depends on the specific data structure, as demonstrated in simulations.

For single-level functional data, BAMIFun and PACE both target the rank-$R$ smooth reconstruction $\mathbf{V}\mathbf{U}^{T}$, with subtle differences: BAMIFun fits an unconstrained low-rank factorization with a smoothness penalty and does not require the columns of $\mathbf{U}$ to be orthogonal. Since the imputation depends only on $\mathbf{V}\mathbf{U}^{T}$, we do not expect BAMIFun to improve imputation accuracy compared to PACE.
What BAMIFun gains in return is the substantially improved coverage and valid downstream inference, as confirmed in our simulations and applications. For multiway functional data, the performance instead depends on the structure, as BAMIFun is more accurate when a low-rank structure is present.
Furthermore, BAMIFun attains close-to-nominal coverage and remains applicable under extreme missingness, whereas existing MFPCA software cannot run.

This paper assumes that the observation pattern may depend on the observed functional values but not on the unobserved ones, which covers MCAR as in our simulations. When missingness instead depends on other auxiliary information, such as scalar outcomes that are informative about the unobserved curve, methods that explicitly model that information \citep{petrovich2022highly, jang2021bayesian} are more appropriate. Extensions of BAMIFun to these more complex dependency settings will be explored in the future.

\allowdisplaybreaks

\newenvironment{definition}[1][Definition]{\begin{trivlist}
    \item[\hskip \labelsep {\bfseries #1}]}{\end{trivlist}}
\newenvironment{example}[1][Example]{\begin{trivlist}
    \item[\hskip \labelsep {\bfseries #1}]}{\end{trivlist}}
\newenvironment{rmq}[1][Remark]{\begin{trivlist}
    \item[\hskip \labelsep {\bfseries #1}]}{\end{trivlist}}

\def\spacingset#1{\renewcommand{\baselinestretch}%
{#1}\small\normalsize} \spacingset{0.5}

\bibliographystyle{Chicago}
\bibliography{Bibliography}

\setstretch{1.775}

\clearpage

\setcounter{section}{0}
\setcounter{figure}{0}
\setcounter{table}{0}
\setcounter{equation}{0}
\setcounter{theorem}{0}
\setcounter{algorithm}{0}
\setcounter{thm}{0}

\makeatletter
\let\assumption\relax
\let\endassumption\relax
\newtheorem{assumption}[thm]{Assumption}
\makeatother

\renewcommand{\thesection}{S\arabic{section}}
\renewcommand{\thesubsection}{S\arabic{section}.\arabic{subsection}}
\renewcommand{\thesubsubsection}{S\arabic{section}.\arabic{subsection}.\arabic{subsubsection}}
\renewcommand{\thefigure}{S\arabic{figure}}
\renewcommand{\thetable}{S\arabic{table}}
\renewcommand{\theequation}{S\arabic{equation}}

\begin{center}
{\LARGE\bf Supplementary Material}\\[6pt]
{\large BAMIFun: Bayesian Multiple Imputation for Functional Data}
\end{center}

\vspace{1em}

\renewcommand{\thealgorithm}{S\arabic{algorithm}}
\section{Bayesian multiple imputation algorithm for functional data (BAMIFun)}
\label{sec:supp-model}

\subsection{The Bayesian imputation model}
\label{sec:supp-model-spec}
Recall the low-rank model with the penalized-spline representation of the functional mode,
\begin{equation}\label{eq:supp-model}
    \mathbf{X} = \mathbf{V}\mathbf{B}\mathbf{\Theta} + \mathbf{E},
\end{equation}
where $\mathbf{X}\in\mathbb{R}^{N\times K}$, $\mathbf{V}\in\mathbb{R}^{N\times R}$, $\mathbf{B}\in\mathbb{R}^{R\times L}$, $\mathbf{\Theta}\in\mathbb{R}^{L\times K}$, and $\mathbf{U}^T=\mathbf{B}\mathbf{\Theta}$ so that $\mathbf{U}\in\mathbb{R}^{K\times R}$. The entries of $\mathbf{E}$ are i.i.d.\ $N(0,\sigma^2)$. Throughout, $\textnormal{Vec}(\cdot)$ denotes the column-wise vectorization, and we repeatedly use the identity $\textnormal{Vec}(\mathbf{A}\mathbf{B}\mathbf{C})=(\mathbf{C}^T\otimes\mathbf{A})\textnormal{Vec}(\mathbf{B})$. Vectorizing both sides of \eqref{eq:supp-model} with $\mathbf{A}=\mathbf{V}$, $\mathbf{B}=\mathbf{B}$, and $\mathbf{C}=\mathbf{\Theta}$ yields the equivalent Bayesian linear model
\begin{equation}\label{eq:supp-vec}
    \textnormal{Vec}(\mathbf{X}) = (\mathbf{\Theta}^T\otimes\mathbf{V})\textnormal{Vec}(\mathbf{B}) + \textnormal{Vec}(\mathbf{E}),\qquad \textnormal{Vec}(\mathbf{E})\sim N(\mathbf{0},\sigma^2\mathbf{I}_{NK}),
\end{equation}
so that the sampling distribution of the data is
\begin{equation}\label{eq:supp-lik}
    p(\mathbf{X}\mid\mathbf{V},\mathbf{B},\sigma^2)\propto (\sigma^2)^{-NK/2}\exp\!\left(-\tfrac{1}{2\sigma^2}\big\|\textnormal{Vec}(\mathbf{X})-(\mathbf{\Theta}^T\otimes\mathbf{V})\textnormal{Vec}(\mathbf{B})\big\|^2\right).
\end{equation}

\medskip
\noindent\textbf{Smoothness prior on $\mathbf{B}$.}\quad
As derived in Section~3.1 of the main manuscript, the roughness penalty on the $r$-th functional factor is $\mathbf{B}_{r\cdot}^T\mathbf{P}\mathbf{B}_{r\cdot}$, where $\mathbf{B}_{r\cdot}\in\mathbb{R}^{L}$ is the $r$-th row of $\mathbf{B}$ and $\mathbf{P}$ is the $L\times L$ penalty matrix. Summing over the $R$ factors and recalling that $\textnormal{Vec}(\mathbf{B})$ stacks the columns of $\mathbf{B}$, so the entry $\mathbf{B}_{rl}$ occupies position $(l-1)R+r$, gives
\begin{equation}\label{eq:supp-penalty}
    \sum_{r=1}^R \mathbf{B}_{r\cdot}^T\mathbf{P}\mathbf{B}_{r\cdot}
    = \sum_{r=1}^R\sum_{l_1=1}^L\sum_{l_2=1}^L \mathbf{B}_{r l_1}\,\mathbf{P}_{l_1 l_2}\,\mathbf{B}_{r l_2}
    = \textnormal{Vec}(\mathbf{B})^T(\mathbf{P}\otimes\mathbf{I}_R)\textnormal{Vec}(\mathbf{B}),
\end{equation}
where the last equality holds because $(\mathbf{P}\otimes\mathbf{I}_R)_{(l_1-1)R+r_1,\,(l_2-1)R+r_2}=\mathbf{P}_{l_1 l_2}\,\delta_{r_1 r_2}$: the penalty couples the basis coefficients (indexed by $l$) through $\mathbf{P}$ while acting as the identity across the factors (indexed by $r$).

The penalty matrix $\mathbf{P}$ is rank deficient: for the second-derivative penalty, its null space consists of the basis coefficients of affine functions, for which the penalty vanishes, so that $q:=\textnormal{rank}(\mathbf{P})=L-2$ and $m:=\dim\textnormal{null}(\mathbf{P})=2$. Adding the penalty \eqref{eq:supp-penalty} to the log-likelihood is equivalent to placing on $\textnormal{Vec}(\mathbf{B})$ the \emph{partially informative} Gaussian prior \citep{speckman2003fully,jiang2025tutorial}
\begin{equation}\label{eq:supp-priorB}
    p\bigl(\textnormal{Vec}(\mathbf{B})\mid\sigma_B^2\bigr)
    = \bigl|\mathbf{P}\otimes\mathbf{I}_R\bigr|_{+}^{1/2}\,
    \bigl(2\pi\sigma_B^2\bigr)^{-qR/2}
    \exp\!\left(-\tfrac{1}{2\sigma_B^2}\textnormal{Vec}(\mathbf{B})^T(\mathbf{P}\otimes\mathbf{I}_R)\textnormal{Vec}(\mathbf{B})\right),
\end{equation}
where $\sigma_B^2$ is the smoothing parameter and $|\cdot|_+$ denotes the product of the non-zero eigenvalues. Since $\mathbf{P}$ is rank deficient, \eqref{eq:supp-priorB} is not a probability density on $\mathbb{R}^{RL}$. Supplementary Material Section~\ref{sec:supp-proper} shows that \eqref{eq:supp-priorB} is exactly the product of a flat prior on the unpenalized coordinates and a proper Gaussian prior on the penalized coordinates. The power of $\sigma_B^2$ in \eqref{eq:supp-priorB} is the \emph{rank} $qR$ of the precision matrix rather than its dimension $RL$, following \citet[eq.~(9)]{speckman2003fully}; the two conventions differ only by a factor $(\sigma_B^2)^{-mR/2}$, so combining the dimension-based kernel with an $\textnormal{IG}(a_B,b_B)$ hyperprior is equivalent to combining \eqref{eq:supp-priorB} with an $\textnormal{IG}(a_B+mR/2,\,b_B)$ hyperprior, and both specifications are proper in $\sigma_B^2$ whenever $a_B,b_B>0$.

\medskip
\noindent\textbf{The remaining proper parameters.}\quad
We adopt
\begin{align}
    \mathbf{V}_{ir} &\stackrel{\textnormal{iid}}{\sim} N(0,\sigma_V^2)
    \quad\Longleftrightarrow\quad
    \boldsymbol{v}_{i\cdot}^T \stackrel{\textnormal{iid}}{\sim} N(\mathbf{0},\,\sigma_V^2\mathbf{I}_R),
    \qquad i=1,\dots,N,\ r=1,\dots,R, \label{eq:supp-pp-priorV}\\
    \sigma^2 &\sim \textnormal{IG}(a_\sigma,\,b_\sigma), \qquad a_\sigma>0,\ b_\sigma>0, \label{eq:supp-prop-priorsig}\\
    \sigma_B^2 &\sim \textnormal{IG}(a_B,\,b_B), \qquad a_B>0,\ b_B>0, \label{eq:supp-pp-priorsigB}
\end{align}
where $\sigma_V^2>0$ is a fixed constant (we take $\sigma_V^2=10^6$ by default) and $a_\sigma=b_\sigma=a_B=b_B=0.01$ by default. We use the shape--scale convention in which $\textnormal{IG}(a,b)$ has density $f_{\textnormal{IG}}(s;a,b)=\{b^{a}/\Gamma(a)\}\,s^{-a-1}\exp(-b/s)$ for $s>0$. The Gaussian prior on $\mathbf{V}$ mirrors the standard treatment of principal component scores as Gaussian random effects in joint functional models \citep[Section~4]{jiang2025tutorial}, and the inverse-gamma hyperpriors on the variance components follow common practice in Bayesian penalized splines \citep{lang2004bayesian,jiang2025tutorial}. In particular, a proper prior on $\sigma_B^2$ ($b_B>0$) is essential rather than merely convenient, since a diffuse prior on the variance of penalized coefficients can yield an improper posterior \citep{lang2004bayesian}. Although the prior on $\mathbf{B}$ remains partially improper---flat on $\textnormal{null}(\mathbf{P}\otimes\mathbf{I}_R)$---we prove in Section~\ref{sec:supp-propriety} that the resulting posterior distribution is proper under a single checkable condition on the observation pattern. %

\subsection{Derivation of the conditional posterior distributions}
\label{sec:supp-conditionals}
We now derive the full conditional distributions used in the Gibbs sampler of Algorithm~1 of the main manuscript, given the completed data $\mathbf{X}$.

\medskip
\noindent\textbf{Full conditional of $\textnormal{Vec}(\mathbf{B})$.}\quad
Write $\mathbf{y}=\textnormal{Vec}(\mathbf{X})$ and $\mathbf{Z}=\mathbf{\Theta}^T\otimes\mathbf{V}\in\mathbb{R}^{NK\times RL}$. Combining the likelihood \eqref{eq:supp-lik} with the prior \eqref{eq:supp-priorB}, the log full conditional of $\textnormal{Vec}(\mathbf{B})$ equals, up to an additive constant,
\begin{equation*}
    -\tfrac{1}{2\sigma^2}\big(\mathbf{y}-\mathbf{Z}\,\textnormal{Vec}(\mathbf{B})\big)^T\big(\mathbf{y}-\mathbf{Z}\,\textnormal{Vec}(\mathbf{B})\big) - \tfrac{1}{2\sigma_B^2}\textnormal{Vec}(\mathbf{B})^T(\mathbf{P}\otimes\mathbf{I}_R)\textnormal{Vec}(\mathbf{B}).
\end{equation*}
Expanding the first term and collecting the quadratic and linear parts in $\textnormal{Vec}(\mathbf{B})$ gives
\begin{equation*}
    -\tfrac12\,\textnormal{Vec}(\mathbf{B})^T\underbrace{\Big(\tfrac{1}{\sigma^2}\mathbf{Z}^T\mathbf{Z}+\tfrac{1}{\sigma_B^2}(\mathbf{P}\otimes\mathbf{I}_R)\Big)}_{\displaystyle \boldsymbol{\Sigma}_B^{-1}}\textnormal{Vec}(\mathbf{B})\; +\; \tfrac{1}{\sigma^2}\textnormal{Vec}(\mathbf{B})^T\mathbf{Z}^T\mathbf{y}\; +\; \textnormal{const}.
\end{equation*}
This is the kernel of a Gaussian distribution; completing the square yields
\begin{equation*}
    \textnormal{Vec}(\mathbf{B})\mid\mathbf{V},\sigma^2,\sigma_B^2,\mathbf{X}\;\sim\; N(\boldsymbol{\mu}_B,\boldsymbol{\Sigma}_B),
\end{equation*}
with
\begin{align*}
    \boldsymbol{\Sigma}_B &= \left(\tfrac{1}{\sigma^2}(\mathbf{\Theta}^T\otimes\mathbf{V})^T(\mathbf{\Theta}^T\otimes\mathbf{V}) + \tfrac{1}{\sigma_B^2}(\mathbf{P}\otimes\mathbf{I}_R)\right)^{-1},\\
    \boldsymbol{\mu}_B &= \tfrac{1}{\sigma^2}\,\boldsymbol{\Sigma}_B\,(\mathbf{\Theta}^T\otimes\mathbf{V})^T\textnormal{Vec}(\mathbf{X}).
\end{align*}

\medskip
\noindent\textbf{Full conditional of $\mathbf{V}$.}\quad
Conditioning on $\mathbf{U}^T=\mathbf{B}\mathbf{\Theta}$ and $\sigma^2$, model \eqref{eq:supp-model} decouples across the rows of $\mathbf{V}$. Let $\boldsymbol{v}_{i\cdot}$ denote the $i$-th row of $\mathbf{V}$; transposing the $i$-th row of \eqref{eq:supp-model} gives
\begin{equation*}
    \mathbf{X}_{i\cdot}^T = \mathbf{U}\,\boldsymbol{v}_{i\cdot}^T + \mathbf{E}_{i\cdot}^T,\qquad \mathbf{E}_{i\cdot}^T\sim N(\mathbf{0},\sigma^2\mathbf{I}_K),
\end{equation*}
a Gaussian linear model in $\boldsymbol{v}_{i\cdot}^T\in\mathbb{R}^R$ with design matrix $\mathbf{U}\in\mathbb{R}^{K\times R}$. Combining this likelihood with the Gaussian prior \eqref{eq:supp-pp-priorV} and completing the square gives, independently for $i=1,\dots,N$,
\begin{equation*}
    \boldsymbol{v}_{i\cdot}^T\mid\mathbf{U},\sigma^2,\mathbf{X}\ \sim\
    N\!\left(\tfrac{1}{\sigma^2}\boldsymbol{\Sigma}_v\mathbf{U}^T\mathbf{X}_{i\cdot}^T,\ \ \boldsymbol{\Sigma}_v\right),
    \qquad
    \boldsymbol{\Sigma}_v=\Bigl(\tfrac{1}{\sigma^2}\mathbf{U}^T\mathbf{U}+\tfrac{1}{\sigma_V^2}\mathbf{I}_R\Bigr)^{-1}.
\end{equation*}
The covariance $\boldsymbol{\Sigma}_v$ does not depend on $i$ and need be formed only once per sweep. The Gaussian prior acts as a ridge, so $\boldsymbol{\Sigma}_v$ is positive definite even when $\mathbf{U}^T\mathbf{U}$ is singular or ill-conditioned.

\medskip
\noindent\textbf{Full conditional of $\sigma^2$.}\quad
Let $\widehat{\mathbf{X}}=\mathbf{V}\mathbf{U}^T=\mathbf{V}\mathbf{B}\mathbf{\Theta}$ be fixed by the current draws of $\mathbf{V}$ and $\mathbf{B}$, and note that $\|\textnormal{Vec}(\mathbf{X})-(\mathbf{\Theta}^T\otimes\mathbf{V})\textnormal{Vec}(\mathbf{B})\|^2=\|\mathbf{X}-\widehat{\mathbf{X}}\|_F^2$, where $\|\cdot\|_F$ is the Frobenius norm. As a function of $\sigma^2$, the likelihood \eqref{eq:supp-lik} is proportional to $(\sigma^2)^{-NK/2}\exp\!\big(-\|\mathbf{X}-\widehat{\mathbf{X}}\|_F^2/(2\sigma^2)\big)$; multiplying by the prior \eqref{eq:supp-prop-priorsig} gives the inverse-gamma kernel
\begin{equation*}
    \sigma^2\mid\mathbf{V},\mathbf{B},\mathbf{X}\;\sim\;\textnormal{IG}\!\left(\tfrac{NK}{2}+a_\sigma,\;\; \tfrac12\|\mathbf{X}-\widehat{\mathbf{X}}\|_F^2+b_\sigma\right).
\end{equation*}

\medskip
\noindent\textbf{Full conditional of $\sigma_B^2$.}\quad
Among all terms, only the prior \eqref{eq:supp-priorB} depends on $\sigma_B^2$. Holding $\mathbf{B}$ fixed and multiplying by the prior \eqref{eq:supp-pp-priorsigB} gives
\begin{equation*}
    p(\sigma_B^2\mid\mathbf{B})\propto (\sigma_B^2)^{-qR/2-a_B-1}\exp\!\left(-\frac{\tfrac12\textnormal{Vec}(\mathbf{B})^T(\mathbf{P}\otimes\mathbf{I}_R)\textnormal{Vec}(\mathbf{B})+b_B}{\sigma_B^2}\right),
\end{equation*}
which is again an inverse-gamma kernel,
\begin{equation}\label{eq:supp-pp-sigB}
    \sigma_B^2\mid\mathbf{B}\;\sim\;\textnormal{IG}\!\left(\frac{qR}{2}+a_B,\;\; \frac12\textnormal{Vec}(\mathbf{B})^T(\mathbf{P}\otimes\mathbf{I}_R)\textnormal{Vec}(\mathbf{B})+b_B\right),
\end{equation}
with $q=L-2$. The shape constant $qR/2$ equals one half of $\textnormal{rank}(\mathbf{P}\otimes\mathbf{I}_R)$, consistent with the rank-based normalization of \eqref{eq:supp-priorB}; under the dimension-based kernel the same update has shape $RL/2+a_B$, corresponding to the equivalent hyperprior $\textnormal{IG}(a_B+mR/2,b_B)$ noted in Section~\ref{sec:supp-model-spec}.

Iterating these four updates constitutes one sweep of the Gibbs sampler in Algorithm~1 of the main manuscript.

\subsection{A fixed--random reparameterization of the smoothness prior}
\label{sec:supp-proper}
The prior \eqref{eq:supp-priorB} is flat along $\textnormal{null}(\mathbf{P}\otimes\mathbf{I}_R)$. In this subsection we make the improper directions explicit by decomposing the spline coefficients into unpenalized \emph{fixed effects} and penalized \emph{random effects}; this is the spectral reparameterization employed for Bayesian functional regression in \citet[Section~2.2]{jiang2025tutorial}, and it underlies the mixed-model representation of penalized splines \citep{ruppert2003semiparametric}. We emphasize that the decomposition is a bijective linear change of variables and does not alter the model; it serves purely as a device for the propriety analysis of Supplementary Material Section~\ref{sec:supp-propriety}, and the sampler of Section~\ref{sec:supp-conditionals} operates on $\textnormal{Vec}(\mathbf{B})$ directly.

For the second-derivative penalty matrix $\mathbf{P}$,
\begin{equation*}
    \textnormal{null}(\mathbf{P})=\bigl\{\mathbf{b}\in\mathbb{R}^{L}:\ (\mathbf{b}^T\mathbf{\Theta})''(t)\equiv 0\bigr\}
\end{equation*}
is the set of coefficient vectors of affine functions, so that $m=2$ and $q=L-2$: the constant and linear components of each functional factor are left unpenalized.

Let $\mathbf{P}=\mathbf{Q}\mathbf{D}\mathbf{Q}^T$ be a spectral decomposition, where
$\mathbf{Q}=[\mathbf{Q}_0,\ \mathbf{Q}_+]\in\mathbb{R}^{L\times L}$ is orthogonal,
$\mathbf{D}=\textnormal{Block-Diagonal}(\mathbf{0}_m,\mathbf{D}_+)$, where
$\mathbf{D}_+=\textnormal{diag}(d_1,\ldots,d_q)$ contains the $q=L-m$ positive eigenvalues
of $\mathbf{P}$; the columns of $\mathbf{Q}_0\in\mathbb{R}^{L\times m}$ span
$\textnormal{null}(\mathbf{P})$ ($m=2$ for the second-order penalty), so that
$\mathbf{P}=\mathbf{Q}_+\mathbf{D}_+\mathbf{Q}_+^T$.
Following the reparameterization in Section~2.2 of \citet{jiang2025tutorial}, define, for
$r=1,\ldots,R$, $\boldsymbol{\gamma}_r=\mathbf{Q}_0^T\mathbf{B}_{r\cdot}\in\mathbb{R}^{m}$, and $\boldsymbol{\eta}_r=\mathbf{D}_+^{1/2}\mathbf{Q}_+^T\mathbf{B}_{r\cdot}\in\mathbb{R}^{q}$, then
\begin{equation}\label{eq:supp-pp-rowsplit}
\mathbf{B}_{r\cdot}=\mathbf{Q}_0\boldsymbol{\gamma}_r+\mathbf{Q}_+\mathbf{D}_+^{-1/2}\boldsymbol{\eta}_r,
\end{equation}
an invertible linear change of coordinates. Since
$\mathbf{P}=\mathbf{Q}_+\mathbf{D}_+\mathbf{Q}_+^T$ and $\mathbf{Q}_+^T\mathbf{Q}_0=\mathbf{0}$,
\begin{equation}\label{eq:supp-pp-quad}
    \mathbf{B}_{r\cdot}^T\mathbf{P}\,\mathbf{B}_{r\cdot}=\boldsymbol{\eta}_r^T\boldsymbol{\eta}_r,
    \qquad r=1,\ldots,R,
\end{equation}
that is, the penalty $\mathbf{P}$ on $\mathbf{B}_{r\cdot}$ is equivalent to the identity
penalty on $\boldsymbol{\eta}_r$ with smoothing parameter $\sigma_B^2$ and no penalty on
$\boldsymbol{\gamma}_r$ \citep{jiang2025tutorial}. Stacking the rows into
$\boldsymbol{\Gamma}\in\mathbb{R}^{R\times m}$ and $\mathbf{H}\in\mathbb{R}^{R\times q}$, with
$\boldsymbol{\gamma}=\textnormal{Vec}(\boldsymbol{\Gamma})$ and
$\boldsymbol{\eta}=\textnormal{Vec}(\mathbf{H})$, equation \eqref{eq:supp-pp-rowsplit} becomes
$\mathbf{B}=\boldsymbol{\Gamma}\mathbf{Q}_0^T+\mathbf{H}\mathbf{D}_+^{-1/2}\mathbf{Q}_+^T$, so that
\begin{equation}\label{eq:supp-pp-etaeta}
    \textnormal{Vec}(\mathbf{B})
    =(\mathbf{Q}_0\otimes\mathbf{I}_R)\,\boldsymbol{\gamma}
    +(\mathbf{Q}_+\mathbf{D}_+^{-1/2}\otimes\mathbf{I}_R)\,\boldsymbol{\eta}.
\end{equation}
Summed over $r$, equation \eqref{eq:supp-pp-quad} becomes
\begin{equation}
    \textnormal{Vec}(\mathbf{B})^T(\mathbf{P}\otimes\mathbf{I}_R)\textnormal{Vec}(\mathbf{B})
    =\boldsymbol{\eta}^T\boldsymbol{\eta}
\end{equation}
The columns of $\mathbf{Q}_0\otimes\mathbf{I}_R$ span
$\textnormal{null}(\mathbf{P}\otimes\mathbf{I}_R)$, of dimension $mR=2R$.

Consequently, the smoothness prior \eqref{eq:supp-priorB} is \emph{exactly} the pair of
independent priors
\begin{equation}\label{eq:supp-pp-mixedprior}
    p(\boldsymbol{\gamma})\propto 1
    \qquad\text{and}\qquad
    \boldsymbol{\eta}\mid\sigma_B^2\ \sim\
    \textnormal{MVN}\!\left(\mathbf{0},\,\sigma_B^2\mathbf{I}_{qR}\right),
\end{equation}
$\boldsymbol{\gamma}$ collects the $mR=2R$ unpenalized \emph{fixed effects} and
$\boldsymbol{\eta}$ the $qR=R(L-2)$ penalized \emph{random effects}, and the factor
$(\sigma_B^2)^{-qR/2}$ in \eqref{eq:supp-priorB} is the normalizing constant of the Gaussian
law of $\boldsymbol{\eta}$; all impropriety induced by the rank deficiency of $\mathbf{P}$ is
confined to $\boldsymbol{\gamma}$. Because the change of coordinates
\eqref{eq:supp-pp-etaeta} is linear and invertible with a Jacobian that is constant in all
model parameters, the two parameterizations induce the same posterior on
$(\mathbf{V},\mathbf{B},\sigma^2,\sigma_B^2)$. The previous results are all analogous to those in \citet{jiang2025tutorial}.

\medskip
\noindent\textbf{Design decomposition.}\quad
Let $\mathbf{\Theta}_0=\mathbf{Q}_0^T\mathbf{\Theta}\in\mathbb{R}^{m\times K}$ and
$\mathbf{\Theta}_1=\mathbf{D}_+^{-1/2}\mathbf{Q}_+^T\mathbf{\Theta}\in\mathbb{R}^{q\times K}$.
For the second-derivative penalty, the rows of $\mathbf{\Theta}_0$ span the same space as the
constant and linear functions evaluated on the grid, namely
$\textnormal{span}\{(1,\dots,1),\,(t_1,\dots,t_K)\}$, since the spline basis reproduces linear
functions; hence $\mathbf{\Theta}_0$ has full row rank $m=2$ whenever the grid contains at
least two distinct time points. Substituting
$\mathbf{B}=\boldsymbol{\Gamma}\mathbf{Q}_0^T+\mathbf{H}\mathbf{D}_+^{-1/2}\mathbf{Q}_+^T$ into
\eqref{eq:supp-model} gives
$\mathbf{V}\mathbf{B}\mathbf{\Theta}=\mathbf{V}\boldsymbol{\Gamma}\mathbf{\Theta}_0+\mathbf{V}\mathbf{H}\mathbf{\Theta}_1$,
so that with $\mathbf{y}=\textnormal{Vec}(\mathbf{X})$ the vectorized model \eqref{eq:supp-vec}
becomes the Gaussian linear mixed model
\begin{equation}\label{eq:supp-pp-lmm}
    \mathbf{y}=\mathbf{Z}_0\boldsymbol{\gamma}+\mathbf{Z}_1\boldsymbol{\eta}+\textnormal{Vec}(\mathbf{E}),
\end{equation}
where $\mathbf{Z}_0=(\mathbf{\Theta}^T\otimes\mathbf{V})(\mathbf{Q}_0\otimes\mathbf{I}_R)=\mathbf{\Theta}_0^T\otimes\mathbf{V}$, and $ \mathbf{Z}_1=(\mathbf{\Theta}^T\otimes\mathbf{V})(\mathbf{Q}_+\mathbf{D}_+^{-1/2}\otimes\mathbf{I}_R)=\mathbf{\Theta}_1^T\otimes\mathbf{V}$
with $\mathbf{Z}_0\in\mathbb{R}^{NK\times mR}$ and $\mathbf{Z}_1\in\mathbb{R}^{NK\times qR}$.
The Kronecker structure yields the identities
\begin{equation}\label{eq:supp-pp-kron}
    \mathbf{Z}_0^T\mathbf{Z}_0=(\mathbf{\Theta}_0\mathbf{\Theta}_0^T)\otimes(\mathbf{V}^T\mathbf{V}),
    \qquad
    \mathbf{Z}_1^T\mathbf{Z}_1=(\mathbf{\Theta}_1\mathbf{\Theta}_1^T)\otimes(\mathbf{V}^T\mathbf{V}),
    \qquad
    \mathbf{Z}_0^T\mathbf{Z}_1=(\mathbf{\Theta}_0\mathbf{\Theta}_1^T)\otimes(\mathbf{V}^T\mathbf{V}).
\end{equation}
In particular, $\textnormal{rank}(\mathbf{Z}_0)=m\cdot\textnormal{rank}(\mathbf{V})$, so
$\mathbf{Z}_0$ has full column rank $mR$ if and only if, in addition, $\mathbf{V}$ has full
column rank $R$. For the multiway model of Section~4 of the main manuscript, the same
decomposition applies with $\mathbf{\Theta}^T\otimes(\mathbf{W}\odot\mathbf{V})$ in place of
$\mathbf{\Theta}^T\otimes\mathbf{V}$, so that
$\mathbf{Z}_0=\mathbf{\Theta}_0^T\otimes(\mathbf{W}\odot\mathbf{V})$ and
$\mathbf{Z}_1=\mathbf{\Theta}_1^T\otimes(\mathbf{W}\odot\mathbf{V})$.

\makeatletter
\@ifundefined{thm}{\newtheorem{thm}{Theorem}[section]}{}
\@ifundefined{lem}{\newtheorem{lem}[thm]{Lemma}}{}
\@ifundefined{assumption}{\newtheorem{assumption}[thm]{Assumption}}{}
\@ifundefined{rmk}{\newtheorem{rmk}[thm]{Remark}}{}
\@ifundefined{proof}{\newenvironment{proof}[1][Proof]{\par\noindent\textit{#1.}\ }{\hfill$\square$\par\medskip}}{}
\providecommand{\qedhere}{}
\makeatother

\section{Propriety of the posterior distribution}
\label{sec:supp-propriety}

This section proves that the posterior distribution of the single-level BAMIFun model is proper under the prior specification described above. Posterior propriety under partially improper priors has been established for related models. For example, \citet{sun2001propriety} proved posterior propriety for the linear mixed model with a Gaussian prior on the random-effect coefficients and a flat prior on the fixed-effect coefficients. \citet{lang2004bayesian} focused on the Bayesian model for penalized splines. However, unlike these works, which consider regression models with fixed and known design matrices, two features of our model prevent a direct appeal to existing propriety results. First, results in \citet{sun2001propriety} guarantee only that, for each fixed $\mathbf{V}$, the conditional marginal likelihood $m(\mathbf{X}_{\mathrm{o}}\mid\mathbf{V})$ of the observed data is finite. Because the design matrices $\mathbf{Z}_0$ and $\mathbf{Z}_1$ in \eqref{eq:supp-pp-lmm} depend on the random score matrix $\mathbf{V}$, finiteness for every $\mathbf{V}$ together with a proper prior on $\mathbf{V}$ does not by itself imply that $\int m(\mathbf{X}_{\mathrm{o}}\mid\mathbf{V})\, p(\mathbf{V})\,d\mathbf{V}<\infty$. Second, the data are observed with missingness, so propriety must be established for the observed-data posterior. In this section, we address both issues and prove that the posterior distribution of our Bayesian imputation model is proper.

\subsection{Observed-data posterior}
The observation indicator $\mathcal{O}\in\{0,1\}^{N\times K}$ and the rank $R$ are fixed.  Let $n_{\mathrm{o}}=\#\{(i,k):\mathcal{O}_{ik}=1\}$ denote the number of observed entries and let $\mathbf{M}\in\{0,1\}^{n_{\mathrm{o}}\times NK}$ be the selection matrix whose rows are the distinct standard basis vectors of $\mathbb{R}^{NK}$ picking out the observed coordinates of $\textnormal{Vec}(\mathbf{X})$, so that the observed data vector is $\mathbf{X}_{\mathrm{o}}=\mathbf{M}\,\textnormal{Vec}(\mathbf{X}) \in\mathbb{R}^{n_{\mathrm{o}}}$.  Because the entries of $\mathbf{E}$ are i.i.d.\ Gaussian, the observed-data likelihood is exactly
\begin{equation}\label{eq:supp-prop-lik}
    \mathbf{X}_{\mathrm{o}}\mid\mathbf{V},\mathbf{B},\sigma^2
    \ \sim\
    N\bigl(\mathbf{M}(\mathbf{\Theta}^T\otimes\mathbf{V})
    \textnormal{Vec}(\mathbf{B}),\ \sigma^2\mathbf{I}_{n_{\mathrm{o}}}\bigr),
\end{equation}
and, by the reparameterization and design decomposition of
Section~\ref{sec:supp-proper} (cf.\ \eqref{eq:supp-pp-lmm}), its mean decomposes as
$\mathbf{Z}_0^{\mathrm{o}}\boldsymbol{\gamma}
+\mathbf{Z}_1^{\mathrm{o}}\boldsymbol{\eta}$ with the observed-data
designs
\begin{equation*}
    \mathbf{Z}_0^{\mathrm{o}}
    =\mathbf{M}(\mathbf{\Theta}_0^T\otimes\mathbf{V})
    \in\mathbb{R}^{n_{\mathrm{o}}\times mR},
    \qquad
    \mathbf{Z}_1^{\mathrm{o}}
    =\mathbf{M}(\mathbf{\Theta}_1^T\otimes\mathbf{V})
    \in\mathbb{R}^{n_{\mathrm{o}}\times qR}.
\end{equation*}
Posterior propriety refers to the finiteness of the observed-data marginal
\begin{equation}\label{eq:supp-prop-marginal}
    m(\mathbf{X}_{\mathrm{o}})
    =\int
    p\bigl(\mathbf{X}_{\mathrm{o}}\mid\mathbf{V},\mathbf{B},\sigma^2\bigr)\,
    p\bigl(\textnormal{Vec}(\mathbf{B})\mid\sigma_B^2\bigr)\,
    p(\mathbf{V})\,p(\sigma^2)\,p(\sigma_B^2)\,
    d\mathbf{B}\,d\mathbf{V}\,d\sigma^2\,d\sigma_B^2 ,
\end{equation}
where $p(\textnormal{Vec}(\mathbf{B})\mid\sigma_B^2)$ is the smoothness prior, equivalently the pair
\eqref{eq:supp-pp-mixedprior}, and $p(\mathbf{V})$ is the Gaussian
prior \eqref{eq:supp-pp-priorV}.  Once
$m(\mathbf{X}_{\mathrm{o}})<\infty$, the data-augmented posterior of
$(\mathbf{V},\mathbf{B},\sigma^2,\sigma_B^2,
\{\mathbf{X}_{ik}:\mathcal{O}_{ik}=0\})$ targeted by the Gibbs sampler
is also proper, because the missing entries have a proper Gaussian
conditional distribution given the parameters.

We impose one weak condition on the observation pattern.

\begin{assumption}\label{ass:supp-pattern-S}
There exists a subset $\mathcal{S}\subseteq\{1,\dots,N\}$ with
$|\mathcal{S}|\geq R+2$ such that every subject $i\in\mathcal{S}$ is
observed at two or more distinct time points; that is,
$\#\{t_k:\mathcal{O}_{ik}=1\}\geq 2$.
\end{assumption}

Assumption~\ref{ass:supp-pattern-S} is weak and readily checkable: at least $R+2$
subjects must each be observed at two or more distinct time points,
and the subjects need not share any common observation times.  It
fails only when at most $R+1$ subjects possess two or more distinct
observation times. Thus, it holds for essentially any realistic design,
however sparse or irregular.  No condition on the observed values
$\mathbf{X}_{\mathrm{o}}$ themselves is required.

\subsection{Lemmas}
We first use the following results from \citet{sun2001propriety}:

\begin{lem}\label{lem:supp-condbound}
Fix $\mathbf{V}$ such that
$\mathbf{Z}_0^{\mathrm{o}}=\mathbf{Z}_0^{\mathrm{o}}(\mathbf{V})$ has
full column rank $mR$, and suppose $n_{\mathrm{o}}\geq mR$.  Then
\begin{equation}\label{eq:supp-prop-condbound}
\begin{aligned}
    m(\mathbf{X}_{\mathrm{o}}\mid\mathbf{V})
    &:=\int
    p(\mathbf{X}_{\mathrm{o}}\mid\mathbf{V},\boldsymbol{\gamma},
    \boldsymbol{\eta},\sigma^2)\,
    p(\boldsymbol{\eta}\mid\sigma_B^2)\,
    p(\sigma^2)\,p(\sigma_B^2)\,
    d\boldsymbol{\gamma}\,d\boldsymbol{\eta}\,d\sigma^2\,d\sigma_B^2 \\
    &\ \leq\ C_0\,
    \bigl|(\mathbf{Z}_0^{\mathrm{o}})^T\mathbf{Z}_0^{\mathrm{o}}\bigr|^{-1/2},
    \qquad
    C_0=\frac{\Gamma\bigl((n_{\mathrm{o}}-mR)/2+a_\sigma\bigr)}
    {\Gamma(a_\sigma)\,(2\pi b_\sigma)^{(n_{\mathrm{o}}-mR)/2}} ,
\end{aligned}
\end{equation}
where $C_0$ depends only on $n_{\mathrm{o}}$, $R$, $a_\sigma$ and
$b_\sigma$.
\end{lem}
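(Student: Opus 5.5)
The plan is to integrate the parameters out in a deliberate order. First integrate the flat fixed effects $\boldsymbol{\gamma}$ exactly, then discard the remaining data-dependent exponential factor by bounding it above by $1$, and finally integrate $\sigma^2$, $\boldsymbol{\eta}$ and $\sigma_B^2$ against their proper priors. All integrands are nonnegative, so Tonelli's theorem lets me choose the order of integration freely. Throughout, write $\mathbf{y}=\mathbf{X}_{\mathrm{o}}$, $n=n_{\mathrm{o}}$, $\mathbf{Z}_0=\mathbf{Z}_0^{\mathrm{o}}(\mathbf{V})$ and $\mathbf{Z}_1=\mathbf{Z}_1^{\mathrm{o}}(\mathbf{V})$; these are fixed because $\mathbf{V}$ is fixed.

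\emph{Step 1 (integrating $\boldsymbol{\gamma}$).} For fixed $(\boldsymbol{\eta},\sigma^2)$, set $\mathbf{r}=\mathbf{y}-\mathbf{Z}_1\boldsymbol{\eta}$. The likelihood is $(2\pi\sigma^2)^{-n/2}\exp\{-\|\mathbf{r}-\mathbf{Z}_0\boldsymbol{\gamma}\|^2/(2\sigma^2)\}$. Let $\mathbf{H}_0$ be the orthogonal projection onto the column space of $\mathbf{Z}_0$; since $\mathbf{Z}_0$ has full column rank $mR$, $\mathbf{Z}_0^T\mathbf{Z}_0$ is invertible and $\mathbf{H}_0$ is well defined. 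Decompose
\[
\|\mathbf{r}-\mathbf{Z}_0\boldsymbol{\gamma}\|^2=\|(\mathbf{I}-\mathbf{H}_0)\mathbf{r}\|^2+(\boldsymbol{\gamma}-\hat{\boldsymbol{\gamma}})^T\mathbf{Z}_0^T\mathbf{Z}_0(\boldsymbol{\gamma}-\hat{\boldsymbol{\gamma}}),
\]
where $\hat{\boldsymbol{\gamma}}$ is the least-squares fit. The Gaussian integral over $\boldsymbol{\gamma}\in\mathbb{R}^{mR}$ contributes $(2\pi\sigma^2)^{mR/2}|\mathbf{Z}_0^T\mathbf{Z}_0|^{-1/2}$. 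After integrating $\boldsymbol{\gamma}$ we are therefore left with
\[
|\mathbf{Z}_0^T\mathbf{Z}_0|^{-1/2}(2\pi\sigma^2)^{-(n-mR)/2}\exp\{-\|(\mathbf{I}-\mathbf{H}_0)\mathbf{r}\|^2/(2\sigma^2)\}.
\]

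\emph{Step 2 (discarding the residual).} Bound the remaining exponential by $1$. The resulting bound no longer depends on $\boldsymbol{\eta}$ or $\sigma_B^2$. Integrating $p(\boldsymbol{\eta}\mid\sigma_B^2)$, a proper $\textnormal{MVN}(\mathbf{0},\sigma_B^2\mathbf{I}_{qR})$ by \eqref{eq:supp-pp-mixedprior}, then gives $1$. Integrating the proper $\textnormal{IG}(a_B,b_B)$ prior on $\sigma_B^2$ also gives $1$.

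\emph{Step 3 (integrating $\sigma^2$).} What remains is an explicit Gamma-type integral:
\[
|\mathbf{Z}_0^T\mathbf{Z}_0|^{-1/2}\int_0^\infty(2\pi s)^{-(n-mR)/2}\frac{b_\sigma^{a_\sigma}}{\Gamma(a_\sigma)}s^{-a_\sigma-1}e^{-b_\sigma/s}\,ds .
\]
Substituting $u=1/s$, the integral equals
\[
\frac{b_\sigma^{a_\sigma}}{\Gamma(a_\sigma)(2\pi)^{(n-mR)/2}}\cdot\frac{\Gamma\bigl((n-mR)/2+a_\sigma\bigr)}{b_\sigma^{(n-mR)/2+a_\sigma}}=C_0 .
\]
The Gamma function is finite here because $(n-mR)/2+a_\sigma>0$, which uses $n\geq mR$ and $a_\sigma>0$. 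This yields exactly the stated bound \eqref{eq:supp-prop-condbound}.

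\emph{Main obstacle.} The only delicate point is Step 1: the flat prior on $\boldsymbol{\gamma}$ must produce a finite integral, which is precisely why full column rank of $\mathbf{Z}_0^{\mathrm{o}}$ is needed. The order of integration also matters. The exponential must be dropped only \emph{after} $\boldsymbol{\gamma}$ has been integrated, so that the $\sigma^2$ power is reduced by $mR/2$; otherwise the $\boldsymbol{\gamma}$ integral would diverge. The bound is crude because it throws away the residual term, but the lemma only asks for $C_0$ as stated, so this suffices.
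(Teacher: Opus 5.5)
Your proposal is correct and follows essentially the same route as the paper: integrate the flat fixed effects $\boldsymbol{\gamma}$ exactly to get the factor $(2\pi\sigma^2)^{-(n_{\mathrm{o}}-mR)/2}|(\mathbf{Z}_0^{\mathrm{o}})^T\mathbf{Z}_0^{\mathrm{o}}|^{-1/2}$, bound what remains, and evaluate the inverse-gamma integral to obtain $C_0$. The only difference is in the $\boldsymbol{\eta}$ step. The paper completes the square in $\boldsymbol{\eta}$ and bounds the resulting determinant factor $|\sigma_B^2\boldsymbol{\Omega}_\eta|^{-1/2}$ and the exponential by $1$, whereas you drop the exponential first so that the proper $\textnormal{MVN}(\mathbf{0},\sigma_B^2\mathbf{I}_{qR})$ prior integrates to $1$; this is slightly simpler and gives the identical bound.
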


\begin{proof}
Write
$\boldsymbol{\Pi}_0=\mathbf{I}_{n_{\mathrm{o}}}
-\mathbf{Z}_0^{\mathrm{o}}\{(\mathbf{Z}_0^{\mathrm{o}})^T
\mathbf{Z}_0^{\mathrm{o}}\}^{-1}(\mathbf{Z}_0^{\mathrm{o}})^T$,
the orthogonal projector onto the orthogonal complement of the column
space of $\mathbf{Z}_0^{\mathrm{o}}$.  All
integrands are nonnegative, so the order of integration below is
justified by Tonelli's theorem.

\emph{Step 1 (integrate $\boldsymbol{\gamma}$).}  Because
$p(\boldsymbol{\gamma})\propto 1$, for fixed
$(\boldsymbol{\eta},\sigma^2)$,
\begin{align*}
    &\int(2\pi\sigma^2)^{-n_{\mathrm{o}}/2}
    \exp\!\Bigl(-\tfrac{1}{2\sigma^2}
    \bigl\|\mathbf{X}_{\mathrm{o}}
    -\mathbf{Z}_0^{\mathrm{o}}\boldsymbol{\gamma}
    -\mathbf{Z}_1^{\mathrm{o}}\boldsymbol{\eta}\bigr\|^2\Bigr)
    d\boldsymbol{\gamma}\\
    &\qquad=(2\pi\sigma^2)^{-(n_{\mathrm{o}}-mR)/2}
    \bigl|(\mathbf{Z}_0^{\mathrm{o}})^T\mathbf{Z}_0^{\mathrm{o}}
    \bigr|^{-1/2}
    \exp\!\Bigl(-\tfrac{1}{2\sigma^2}
    (\mathbf{X}_{\mathrm{o}}-\mathbf{Z}_1^{\mathrm{o}}
    \boldsymbol{\eta})^T\boldsymbol{\Pi}_0
    (\mathbf{X}_{\mathrm{o}}-\mathbf{Z}_1^{\mathrm{o}}
    \boldsymbol{\eta})\Bigr).
\end{align*}

\emph{Step 2 (integrate $\boldsymbol{\eta}$).}  Multiplying by the
$N(\mathbf{0},\sigma_B^2\mathbf{I}_{qR})$ density of
$\boldsymbol{\eta}$ from \eqref{eq:supp-pp-mixedprior} and completing
the square with precision
$\boldsymbol{\Omega}_\eta=\sigma^{-2}(\mathbf{Z}_1^{\mathrm{o}})^T\boldsymbol{\Pi}_0
\mathbf{Z}_1^{\mathrm{o}}+\sigma_B^{-2}\mathbf{I}_{qR}$ gives, after
integrating, the factor
\begin{equation*}
    \bigl|\sigma_B^2\boldsymbol{\Omega}_\eta\bigr|^{-1/2}
    =\Bigl|\tfrac{\sigma_B^2}{\sigma^2}
    (\mathbf{Z}_1^{\mathrm{o}})^T\boldsymbol{\Pi}_0\mathbf{Z}_1^{\mathrm{o}}
    +\mathbf{I}_{qR}\Bigr|^{-1/2}\leq 1,
\end{equation*}
multiplied by the exponential of a nonpositive quantity, which is at
most one.  Hence
\begin{equation*}
    p(\mathbf{X}_{\mathrm{o}}\mid\mathbf{V},\sigma^2,\sigma_B^2)
    \ \leq\
    (2\pi\sigma^2)^{-(n_{\mathrm{o}}-mR)/2}
    \bigl|(\mathbf{Z}_0^{\mathrm{o}})^T\mathbf{Z}_0^{\mathrm{o}}
    \bigr|^{-1/2}.
\end{equation*}

\emph{Step 3 (integrate $\sigma^2$ and $\sigma_B^2$).}  Against the
$\textnormal{IG}(a_\sigma,b_\sigma)$ density
$\{b_\sigma^{a_\sigma}/\Gamma(a_\sigma)\}
(\sigma^2)^{-a_\sigma-1}\exp(-b_\sigma/\sigma^2)$,
\begin{equation*}
    \int_0^\infty(\sigma^2)^{-(n_{\mathrm{o}}-mR)/2}\,
    \frac{b_\sigma^{a_\sigma}}{\Gamma(a_\sigma)}\,
    (\sigma^2)^{-a_\sigma-1}
    \exp\!\bigl(-b_\sigma/\sigma^2\bigr)\,d\sigma^2
    =\frac{\Gamma\bigl((n_{\mathrm{o}}-mR)/2+a_\sigma\bigr)}
    {\Gamma(a_\sigma)}\,b_\sigma^{-(n_{\mathrm{o}}-mR)/2},
\end{equation*}
which is finite because $a_\sigma,b_\sigma>0$ and
$n_{\mathrm{o}}\geq mR$; the proper
$\textnormal{IG}(a_B,b_B)$ prior for $\sigma_B^2$ integrates to one.
Collecting constants yields \eqref{eq:supp-prop-condbound}.
\end{proof}

Steps 1--3 parallel the closed-form integration in Theorem~1 of
\citet{sun2001propriety}: conditional on $\mathbf{V}$, the model
\eqref{eq:supp-pp-lmm} has flat-prior fixed effects and proper priors
on all variance components, so finiteness of
$m(\mathbf{X}_{\mathrm{o}}\mid\mathbf{V})$ for a fixed full-rank
design is precisely their result.  The display
\eqref{eq:supp-prop-condbound} additionally records the explicit
dependence of the bound on $\mathbf{Z}_0^{\mathrm{o}}(\mathbf{V})$,
which is what permits integration over the random design in the next
step. Integrating over the score matrix gives us the following results:

\begin{lem}\label{lem:supp-vmoment}
Under Assumption~\ref{ass:supp-pattern-S} and the Gaussian prior
\eqref{eq:supp-pp-priorV} on $\mathbf{V}$:
\begin{itemize}
    \item[(i)] $\mathbf{Z}_0^{\mathrm{o}}(\mathbf{V})$ has full column
    rank $mR$ for almost every $\mathbf{V}$;
    \item[(ii)] $\mathbb{E}\bigl[\,\bigl|(\mathbf{Z}_0^{\mathrm{o}})^T
    \mathbf{Z}_0^{\mathrm{o}}\bigr|^{-1/2}\bigr]<\infty$.
\end{itemize}
\end{lem}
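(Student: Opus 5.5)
The plan is to bound $(\mathbf{Z}_0^{\mathrm{o}})^T\mathbf{Z}_0^{\mathrm{o}}$ below, in the Loewner order, by a Kronecker product involving only the subjects in $\mathcal{S}$. This reduces both claims to facts about a Wishart matrix.

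\emph{Row structure.} Let $\boldsymbol{\theta}_k\in\mathbb{R}^{m}$ ($m=2$) be the $k$-th column of $\mathbf{\Theta}_0$. The row of $\mathbf{\Theta}_0^T\otimes\mathbf{V}$ indexed by the entry $(i,k)$ is $\boldsymbol{\theta}_k^T\otimes\boldsymbol{v}_{i\cdot}$. Since $\mathbf{M}$ selects the observed rows,
\begin{equation*}
(\mathbf{Z}_0^{\mathrm{o}})^T\mathbf{Z}_0^{\mathrm{o}}=\sum_{(i,k):\,\mathcal{O}_{ik}=1}(\boldsymbol{\theta}_k\boldsymbol{\theta}_k^T)\otimes(\boldsymbol{v}_{i\cdot}^T\boldsymbol{v}_{i\cdot})
=\sum_{i=1}^N\mathbf{A}_i\otimes(\boldsymbol{v}_{i\cdot}^T\boldsymbol{v}_{i\cdot}),
\qquad \mathbf{A}_i=\sum_{k:\,\mathcal{O}_{ik}=1}\boldsymbol{\theta}_k\boldsymbol{\theta}_k^T .
\end{equation*}
By Section~\ref{sec:supp-proper}, $\boldsymbol{\theta}_k=\mathbf{G}(1,t_k)^T$ for a fixed invertible $2\times 2$ matrix $\mathbf{G}$, because the rows of $\mathbf{\Theta}_0$ span the constant and linear functions on the grid. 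For $i\in\mathcal{S}$, the two distinct observed times make $\sum_k (1,t_k)^T(1,t_k)$ positive definite, so $\mathbf{A}_i\succ 0$. Set $c=\min_{i\in\mathcal{S}}\lambda_{\min}(\mathbf{A}_i)>0$; it depends only on the design. Kronecker products of positive semidefinite matrices are positive semidefinite, so we can drop the subjects outside $\mathcal{S}$ and replace each $\mathbf{A}_i$ by $c\,\mathbf{I}_m$. This gives
\begin{equation*}
(\mathbf{Z}_0^{\mathrm{o}})^T\mathbf{Z}_0^{\mathrm{o}}\succeq c\,\mathbf{I}_m\otimes(\mathbf{V}_{\mathcal{S}}^T\mathbf{V}_{\mathcal{S}}),
\end{equation*}
where $\mathbf{V}_{\mathcal{S}}$ denotes the rows of $\mathbf{V}$ indexed by $\mathcal{S}$.

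\emph{Determinant bound.} The determinant is monotone on the Loewner order of positive semidefinite matrices, and $|\mathbf{I}_m\otimes\mathbf{W}|=|\mathbf{W}|^m$. Hence
\begin{equation*}
\bigl|(\mathbf{Z}_0^{\mathrm{o}})^T\mathbf{Z}_0^{\mathrm{o}}\bigr|^{-1/2}\le c^{-mR/2}\,\bigl|\mathbf{V}_{\mathcal{S}}^T\mathbf{V}_{\mathcal{S}}\bigr|^{-m/2}=c^{-R}\bigl|\mathbf{V}_{\mathcal{S}}^T\mathbf{V}_{\mathcal{S}}\bigr|^{-1}.
\end{equation*}
Under the prior \eqref{eq:supp-pp-priorV}, $\mathbf{W}=\mathbf{V}_{\mathcal{S}}^T\mathbf{V}_{\mathcal{S}}$ follows a Wishart$_R(n,\sigma_V^2\mathbf{I}_R)$ law with $n=|\mathcal{S}|\ge R+2$.

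\emph{Proof of (i).} Since $n\ge R$, $\mathbf{W}$ is nonsingular almost surely. The lower bound above then makes $(\mathbf{Z}_0^{\mathrm{o}})^T\mathbf{Z}_0^{\mathrm{o}}$ positive definite, so $\mathbf{Z}_0^{\mathrm{o}}$ has full column rank $mR$ for almost every $\mathbf{V}$.

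\emph{Proof of (ii).} By Bartlett's decomposition, $|\mathbf{W}|\stackrel{d}{=}\sigma_V^{2R}\prod_{j=1}^R\chi^2_{n-j+1}$ with independent factors. Therefore
\begin{equation*}
\mathbb{E}|\mathbf{W}|^{-1}=\sigma_V^{-2R}\prod_{j=1}^R\mathbb{E}\bigl[1/\chi^2_{n-j+1}\bigr].
\end{equation*}
Each factor is finite, equal to $1/(n-j-1)$, provided $n-j+1>2$ for all $j$. The smallest degrees of freedom is $n-R+1\ge 3$, so this holds, and (ii) follows.

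The step I expect to be most delicate is exactly this inverse moment. It is where the threshold $|\mathcal{S}|\ge R+2$ enters: with only $R+1$ subjects, the last factor $\mathbb{E}[1/\chi^2_2]$ diverges. A secondary point to verify carefully is that the rows of $\mathbf{\Theta}_0$ reproduce $(1,t_k)$ up to an invertible $2\times2$ map; this is what makes each $\mathbf{A}_i$, for $i\in\mathcal{S}$, positive definite.
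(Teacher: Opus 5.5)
Your proposal is correct and follows the paper's proof essentially step for step. The shared steps are: the subject-wise decomposition $\sum_i \mathbf{A}_i\otimes\boldsymbol{v}_{i\cdot}^T\boldsymbol{v}_{i\cdot}$; positive definiteness of $\mathbf{A}_i$ for $i\in\mathcal{S}$, via $\mathbf{\Theta}_0$ being an invertible transform of $(1,t_k)$; the Loewner bound by $c\,\mathbf{I}_m\otimes\mathbf{V}_{\mathcal{S}}^T\mathbf{V}_{\mathcal{S}}$ followed by determinant monotonicity; and the Wishart--Bartlett inverse-moment computation, which uses $|\mathcal{S}|\ge R+2$ exactly as the paper does.
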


\begin{proof}
The rows of $\mathbf{Z}_0^{\mathrm{o}}$ are indexed by the observed
pairs $(i,k)$ and are equal to
$(\mathbf{\Theta}_0)_{\cdot k}^T\otimes\boldsymbol{v}_{i\cdot}$, where
$(\mathbf{\Theta}_0)_{\cdot k}\in\mathbb{R}^{m}$ denotes the $k$-th
column of $\mathbf{\Theta}_0$.  Writing
$\mathbf{O}_i=\textnormal{diag}(\mathcal{O}_{i1},\dots,\mathcal{O}_{iK})$,
the matrix $(\mathbf{Z}_0^{\mathrm{o}})^T\mathbf{Z}_0^{\mathrm{o}}$ therefore decomposes subject-wise as
\begin{equation}\label{eq:supp-prop-gram}
    (\mathbf{Z}_0^{\mathrm{o}})^T\mathbf{Z}_0^{\mathrm{o}}
    =\sum_{i=1}^{N}\mathbf{A}_i\otimes
    \boldsymbol{v}_{i\cdot}^T\boldsymbol{v}_{i\cdot},
\end{equation}
where 
\begin{equation*}
    \mathbf{A}_i:=\sum_{k:\,\mathcal{O}_{ik}=1}
    (\mathbf{\Theta}_0)_{\cdot k}(\mathbf{\Theta}_0)_{\cdot k}^T
    =\mathbf{\Theta}_0\mathbf{O}_i\mathbf{\Theta}_0^T = \big(\mathbf{\Theta}_0\mathbf{O}_i\big)\big(\mathbf{\Theta}_0\mathbf{O}_i\big)^T
    \in\mathbb{R}^{m\times m}
\end{equation*}
The last equation is because $\mathbf{O}_i$ is idempotent with $\mathbf{O}_i\mathbf{O}_i=\mathbf{O}_i$. Therefore, in order to show that $\mathbf{A}_i$ is positive definite, we only need to show that $\mathbf{\Theta}_0\mathbf{O}_i$ is of full row rank $m=2$.

Here, $\mathbf{\Theta}_0$ is an $m \times K$ matrix of spline basis that corresponds to the fixed effect (null space of the penalty matrix). The two rows of
\(\mathbf\Theta_0\) form a basis for
\(\operatorname{span}\{1,t\}\) evaluated on the observation grid.
Consequently, there exists a nonsingular matrix
\(\mathbf C\in\mathbb R^{2\times2}\) such that
\begin{equation}
    \boldsymbol\Theta_0
=
\mathbf C
\begin{pmatrix}
1& ...& 1\\ t_1& ...& t_K
\end{pmatrix}.
\end{equation}
For subject $i$ who has two distinct observation times \(t_{k_1}\ne t_{k_2}\), denote $\mathbf{B}$ as the submatrix of $\mathbf{\Theta}_0\mathbf{O}_i$ corresponds to observation times \(t_{k_1}\ne t_{k_2}\),
\[
\mathbf{B}_i
=
\mathbf C
\begin{pmatrix}
1 & 1\\
t_{k_1} & t_{k_2}
\end{pmatrix},
\]
and hence
\[
\det\!\left[
\mathbf{B}_i
\right]
=
\det(\mathbf C)(t_{k_2}-t_{k_1})\ne0,
\]
which makes $\mathbf{B}_i$ of full rank. Thus, $\mathbf{\Theta}_0\mathbf{O}_i$ is also of full rank (since its submatrix $\mathbf{B}_i$ is of rank $m=2$) for any subject with two or more observations and \(\mathbf A_i\) is positive definite: \(\mathbf A_i\succ0\). %
Hence $\lambda_i:=\lambda_{\min}(\mathbf{A}_i)>0$, and by Assumption~\ref{ass:supp-pattern-S}, $\lambda:=\min_{i\in\mathcal{S}}\lambda_i>0$.

Discarding the positive semidefinite terms of
\eqref{eq:supp-prop-gram} with $i\notin\mathcal{S}$ and using
$\mathbf{A}_i\succeq\lambda_i\mathbf{I}_m\succeq\lambda\mathbf{I}_m$
for $i\in\mathcal{S}$, due to the fact that \(\mathbf{A}\otimes\mathbf{C}
\succeq
\mathbf{B}\otimes\mathbf{C},\) whenever \(\mathbf{A}\succeq\mathbf{B}\) and \(\mathbf{C}\succeq\mathbf{0}\), we have %
\begin{equation}\label{eq:supp-prop-minorize}
    (\mathbf{Z}_0^{\mathrm{o}})^T\mathbf{Z}_0^{\mathrm{o}}
    \ \succeq\
    \lambda\,\mathbf{I}_m\otimes
    \Bigl(\sum_{i\in\mathcal{S}}
    \boldsymbol{v}_{i\cdot}^T\boldsymbol{v}_{i\cdot}\Bigr)
    =\lambda\,\mathbf{I}_m\otimes
    (\mathbf{V}_{\mathcal{S}}^T\mathbf{V}_{\mathcal{S}}),
\end{equation}
where $\mathbf{V}_{\mathcal{S}}\in\mathbb{R}^{|\mathcal{S}|\times R}$
collects the rows of $\mathbf{V}$ with $i\in\mathcal{S}$. 
Since \(|\mathbf{A}|\geq|\mathbf{B}|\) whenever \(\mathbf{A}\succeq\mathbf{B}\succeq\mathbf{0},\) and $m=2$, we have,
\begin{equation*}
    \bigl|(\mathbf{Z}_0^{\mathrm{o}})^T\mathbf{Z}_0^{\mathrm{o}}\bigr|
    \ \geq\
    \lambda^{2R}\,
    \bigl|\mathbf{V}_{\mathcal{S}}^T\mathbf{V}_{\mathcal{S}}\bigr|^{2}.
\end{equation*}
The entries of $\mathbf{V}_{\mathcal{S}}$ are i.i.d.\
$N(0,\sigma_V^2)$ under \eqref{eq:supp-pp-priorV}, so
$\mathbf{W}:=\mathbf{V}_{\mathcal{S}}^T\mathbf{V}_{\mathcal{S}}$
follows the Wishart distribution
$W_R(|\mathcal{S}|,\sigma_V^2\mathbf{I}_R)$ and is almost surely
nonsingular because $|\mathcal{S}|\geq R$; with the preceding display
this proves (i).  

By the Bartlett decomposition
\citep{muirhead2009aspects},
$|\mathbf{W}|\overset{d}{=}\sigma_V^{2R}\prod_{r=1}^{R}
\chi^2_{|\mathcal{S}|-r+1}$ with independent chi-squared factors, and
$\mathbb{E}[1/\chi^2_\nu]=1/(\nu-2)$ for $\nu>2$. On the almost surely event \(\{|\mathbf{W}|>0\}\), we have 
\begin{equation*}
    \bigl|(\mathbf{Z}_0^{\mathrm{o}})^T\mathbf{Z}_0^{\mathrm{o}}\bigr|^{-1/2}
    \ \leq\
    \lambda^{-R}\,
    \bigl|\mathbf{W}\bigr|^{-1}.
\end{equation*}
Since
$|\mathcal{S}|\geq R+2$, every degree of freedom $r$ satisfies
$|\mathcal{S}|-r+1\geq|\mathcal{S}|-R+1\geq 3$, thus
\begin{equation*}
    \mathbb{E}\bigl[\,\bigl|(\mathbf{Z}_0^{\mathrm{o}})^T
    \mathbf{Z}_0^{\mathrm{o}}\bigr|^{-1/2}\bigr]
    \ \leq\
    \lambda^{-R}\,\mathbb{E}\bigl[\,|\mathbf{W}|^{-1}\bigr]
    =\lambda^{-R}\,\sigma_V^{-2R}
    \prod_{r=1}^{R}\frac{1}{|\mathcal{S}|-r-1}
    \ <\ \infty. \qedhere
\end{equation*}
\end{proof}

\subsection{Main result}

\begin{thm}\label{thm:supp-propriety}
Consider the single-level model \eqref{eq:supp-model} with the
rank-normalized smoothness prior (equivalently
\eqref{eq:supp-pp-mixedprior}), the Gaussian prior
\eqref{eq:supp-pp-priorV} on $\mathbf{V}$, the inverse-gamma priors
\eqref{eq:supp-pp-priorsigB} on $\sigma_B^2$ and
\eqref{eq:supp-prop-priorsig} on $\sigma^2$, where
$a_\sigma,b_\sigma,a_B,b_B>0$ and $\sigma_V^2>0$.  Under
Assumption~\ref{ass:supp-pattern-S}, the posterior distribution of
$(\mathbf{V},\mathbf{B},\sigma^2,\sigma_B^2)$ given the observed data
$\mathbf{X}_{\mathrm{o}}$ is proper, i.e.,
$m(\mathbf{X}_{\mathrm{o}})<\infty$ in
\eqref{eq:supp-prop-marginal}.
\end{thm}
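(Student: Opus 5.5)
The plan is to combine Lemma~\ref{lem:supp-condbound} and Lemma~\ref{lem:supp-vmoment} through Tonelli's theorem. The reparameterization \eqref{eq:supp-pp-etaeta} is a linear bijection with constant Jacobian, so I will compute $m(\mathbf{X}_{\mathrm{o}})$ in the $(\boldsymbol{\gamma},\boldsymbol{\eta})$ coordinates. There the smoothness prior is $p(\boldsymbol{\gamma})\propto 1$ times the proper Gaussian law of $\boldsymbol{\eta}$, as in \eqref{eq:supp-pp-mixedprior}. The integrand in \eqref{eq:supp-prop-marginal} is nonnegative, so I may integrate first over $(\boldsymbol{\gamma},\boldsymbol{\eta},\sigma^2,\sigma_B^2)$ with $\mathbf{V}$ fixed, and then over $\mathbf{V}$ against $p(\mathbf{V})$:
\begin{equation*}
m(\mathbf{X}_{\mathrm{o}})=\int m(\mathbf{X}_{\mathrm{o}}\mid\mathbf{V})\,p(\mathbf{V})\,d\mathbf{V}.
\end{equation*}

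Before applying Lemma~\ref{lem:supp-condbound}, I need its hypothesis $n_{\mathrm{o}}\ge mR=2R$. Assumption~\ref{ass:supp-pattern-S} gives at least $R+2$ subjects with two or more observations each, so $n_{\mathrm{o}}\ge 2(R+2)>2R$. Lemma~\ref{lem:supp-vmoment}(i) shows that $\mathbf{Z}_0^{\mathrm{o}}(\mathbf{V})$ has full column rank off a $p(\mathbf{V})$-null set. This set has Lebesgue measure zero: it is contained in $\{|\mathbf{V}_{\mathcal{S}}^T\mathbf{V}_{\mathcal{S}}|=0\}$, the zero set of a nontrivial polynomial. It therefore contributes nothing to the outer integral. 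On its complement, Lemma~\ref{lem:supp-condbound} gives $m(\mathbf{X}_{\mathrm{o}}\mid\mathbf{V})\le C_0\,|(\mathbf{Z}_0^{\mathrm{o}})^T\mathbf{Z}_0^{\mathrm{o}}|^{-1/2}$, and hence
\begin{equation*}
m(\mathbf{X}_{\mathrm{o}})\le C_0\,\mathbb{E}_{p(\mathbf{V})}\bigl[|(\mathbf{Z}_0^{\mathrm{o}})^T\mathbf{Z}_0^{\mathrm{o}}|^{-1/2}\bigr].
\end{equation*}
This expectation is finite by Lemma~\ref{lem:supp-vmoment}(ii), and $C_0<\infty$ because $a_\sigma,b_\sigma>0$. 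This establishes $m(\mathbf{X}_{\mathrm{o}})<\infty$.

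For the positivity $m(\mathbf{X}_{\mathrm{o}})>0$ required in Theorem~\ref{the1}, I will note that the integrand is strictly positive everywhere. The Gaussian likelihood, the Gaussian densities of $\mathbf{V}$ and $\boldsymbol{\eta}$, the flat density of $\boldsymbol{\gamma}$, and the inverse-gamma densities are all strictly positive, and the integral of a strictly positive function over a set of positive measure is positive.

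The main obstacle is the measure-theoretic bookkeeping rather than any new estimate, since the substantive bounds are already in the two lemmas. Two points need care. First, the flat prior on $\boldsymbol{\gamma}$ makes the inner integral infinite whenever $\mathbf{Z}_0^{\mathrm{o}}$ is rank deficient, so I must check that this exceptional set is null under the Lebesgue reference measure; the polynomial zero-set argument does this. Second, the constant Jacobian of the change of variables must be tracked so that the marginal in the $\mathbf{B}$ parameterization differs from the $(\boldsymbol{\gamma},\boldsymbol{\eta})$ one only by a finite positive factor, $|\mathbf{P}\otimes\mathbf{I}_R|_+^{1/2}$ together with the determinant of $\mathbf{Q}$.
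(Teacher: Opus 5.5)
Your proposal is correct and follows the paper's proof essentially step for step. Both pass to the $(\boldsymbol{\gamma},\boldsymbol{\eta})$ coordinates, note $n_{\mathrm{o}}\ge 2(R+2)>mR$, apply Lemma~\ref{lem:supp-condbound} on the almost-sure full-rank event of Lemma~\ref{lem:supp-vmoment}(i), and finish with the moment bound of Lemma~\ref{lem:supp-vmoment}(ii).

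Your treatment of the exceptional null set, where the inner integral can be infinite, and your positivity argument make explicit details that the paper leaves implicit. One small correction: the Jacobian of $(\boldsymbol{\gamma},\boldsymbol{\eta})\mapsto\textnormal{Vec}(\mathbf{B})$ comes from $\mathbf{D}_+$, not from $\mathbf{Q}$ (since $|\det\mathbf{Q}|=1$). It equals $|\mathbf{D}_+|^{-R/2}$ and exactly cancels $|\mathbf{P}\otimes\mathbf{I}_R|_+^{1/2}=|\mathbf{D}_+|^{R/2}$, so the two marginals actually coincide.
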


\begin{proof}
As shown in
Section~\ref{sec:supp-proper}, the map
$\textnormal{Vec}(\mathbf{B})\mapsto
(\boldsymbol{\gamma},\boldsymbol{\eta})$ is a linear bijection with
constant Jacobian, so integrating over $\mathbf{B}$ in
\eqref{eq:supp-prop-marginal} is equivalent to integrating over
$(\boldsymbol{\gamma},\boldsymbol{\eta})$ with the priors
\eqref{eq:supp-pp-mixedprior}.  Assumption~\ref{ass:supp-pattern-S}
gives $n_{\mathrm{o}}\geq 2|\mathcal{S}|\geq 2R+4>mR$, since each
subject in $\mathcal{S}$ contributes at least two observed entries.
On the almost-sure event of Lemma~\ref{lem:supp-vmoment}(i),
Lemma~\ref{lem:supp-condbound} therefore applies, and
\begin{equation*}
    m(\mathbf{X}_{\mathrm{o}})
    =\int m(\mathbf{X}_{\mathrm{o}}\mid\mathbf{V})\,
    p(\mathbf{V})\,d\mathbf{V}
    \ \leq\
    C_0\,\mathbb{E}\bigl[\,\bigl|(\mathbf{Z}_0^{\mathrm{o}})^T
    \mathbf{Z}_0^{\mathrm{o}}\bigr|^{-1/2}\bigr]
    \ <\ \infty
\end{equation*}
by Lemma~\ref{lem:supp-vmoment}(ii).
\end{proof}

In particular, the default hyperparameter choices of
Section~\ref{sec:supp-model-spec} ($\sigma_V^2=10^6$ and
$a_\sigma=b_\sigma=a_B=b_B=0.01$) satisfy the conditions of
Theorem~\ref{thm:supp-propriety}, so the posterior targeted by
Algorithm~1 of the main manuscript is proper whenever
Assumption~\ref{ass:supp-pattern-S} holds.

\section{Cross-validation algorithm for the rank $R$ (with the smoothing parameter sampled through MCMC)}
Algorithm \ref{alg2} presents the {cross-validation algorithm for selecting the rank $R$}, where the smoothing parameter $\sigma_B^2$ is sampled through MCMC.
The cross-validation algorithm only requires the input of frequentist FPCA results as the initial values and the parameter set for cross-validation. For the default parameter grid, we propose to search across $R\in \{R_{\textnormal{freq}}-2, R_{\textnormal{freq}}-1, R_{\textnormal{freq}}\}$. Users may change the parameter grid based on their specific data structure. The cross-validation randomly selects $40\%$ of the observed data as the validation set and the rest of the data as the training set. The Bayesian functional imputation algorithm in Algorithm~1 of the main manuscript is then conducted using the training data. Parameters are selected to minimize the imputation mean squared error over the elements within the validation set. 

\begin{algorithm}
	\caption{Cross-validation algorithm for rank $R$ with $\sigma_B^2$ sampled via MCMC} 
 \label{alg2}
	\begin{algorithmic}[1]
        \State \textbf{Input}: observed functional data $\{\mathbf{X}, \mathcal{O}=1\}$.
        \State Run the frequentist face.sparse() algorithm to $\{\mathbf{X}, \mathcal{O}=1\}$ with percent variance explained $\textnormal{pve} = 0.99$ and get the estimated eigenfunctions $\hat{\mathbf{U}}_{\textnormal{init}}$, scores $\hat{\mathbf{V}}_{\textnormal{init}}$, and number of rank $R_{\textnormal{freq}}$.
        \State Set the parameter grid setting for cross-validation: $R\in \{R_{\textnormal{freq}}-2, R_{\textnormal{freq}}-1, R_{\textnormal{freq}}\}$.
        \State Randomly select $40\%$ of the observed data points as validation set $\mathcal{V}=1$ and the rest of the observed data as training set $\mathcal{V}=0$.
        \For {$c=1,...,C$-th cross-validation setting}
            \State Run the algorithm in Algorithm~1 of the main manuscript with settings $R=R[c]$ and initial value $\mathbf{U}=\hat{\mathbf{U}}_{\textnormal{init}}[c], \mathbf{V}=\hat{\mathbf{V}}_{\textnormal{init}}[c]$ where $R[c]$ are the $c$-th value of the parameter grid, $\hat{\mathbf{U}}_{\textnormal{init}}[c]$ is the first $R[c]$ eigenfunctions from $\hat{\mathbf{U}}_{\textnormal{init}}$ and $\hat{\mathbf{V}}_{\textnormal{init}}[c]$ similarly.
            \State Calculate the imputation mean squared error over the validation set $\mathcal{V}=1$.
        \EndFor	
        \State Select the parameter $R$ which has the smallest MSE over the validation set. 
        \State Run the algorithm in Algorithm~1 of the main manuscript with the selected parameter. 
        \State \textbf{Output}: Bayesian posterior draws of the imputed functional data $\{\hat{{\mathbf{X}}}^{1},...,\hat{{\mathbf{X}}}^S\}$.
	\end{algorithmic} 
\end{algorithm}

\section{Bayesian multiple imputation algorithm with smoothing parameter selected by cross-validation}\label{sec:supp-cv-sigmaB}

\begin{algorithm}
	\caption{{Bayesian multiple imputation for single-level functional data with prespecified $\sigma_B^2$ and $R$}} 
 \label{alg1.supp}
	\begin{algorithmic}[1]
        \State \textbf{Input}: observed functional data $\{\mathbf{X}, \mathcal{O}=1\}$; number of principal components (npc) $R$; smoothing parameter $\sigma^2_B$; spline basis matrix $\mathbf{\Theta}$ and the penalty matrix $\mathbf{P}$.
        \State Run the frequentist face.sparse() algorithm to $\{\mathbf{X}, \mathcal{O}=1\}$ with $\textnormal{npc} = R$ and get the estimated eigenfunctions $\hat{\mathbf{U}}_{\textnormal{init}}$ and scores $\hat{\mathbf{V}}_{\textnormal{init}}$. 
        \State Set the initial value of $\mathbf{U}$ as $\hat{\mathbf{U}}_{\textnormal{init}}$ and $\mathbf{V}$ as $\hat{\mathbf{V}}_{\textnormal{init}}$. Impute the missing elements of $\{\mathbf{X}, \mathcal{O}=0\}$ using $\mathbf{V}\mathbf{U}^T$. Set the initial value of $\sigma^2$ equal to $1$.

        \For {$s=1,...,S$-th MCMC iteration}
            \State Sample $\hat{\mathbf{V}}^s$, $\hat{\mathbf{B}}^s$, and $(\hat{\sigma}^2)^{s}$ with the conditional posterior distributions.
            \State Calculate $(\hat{\mathbf{U}}^s)^T = \hat{\mathbf{B}}^s\mathbf{\Theta}$.
            \State Calculate the underlying structure of ${\mathbf{X}}$ as $\Tilde{{\mathbf{X}}}^s = \hat{\mathbf{V}}^s(\hat{\mathbf{U}}^s)^T$.
            \State Impute the missing value $\{\hat{{\mathbf{X}}}^s, \mathcal{O}=0\}$ following the Gaussian distribution with mean $\Tilde{{\mathbf{X}}}^s$ and variance $(\hat{\sigma}^2)^{s}$.
            \State Rescale $\hat{\mathbf{U}}^s$ and $\hat{\mathbf{V}}^s$ such that each eigenfunction has unit norm.
        \EndFor		
        \State \textbf{Output}: Bayesian posterior draws of the imputed functional data $\{\hat{{\mathbf{X}}}^{1},...,\hat{{\mathbf{X}}}^S\}$.

	\end{algorithmic} 
\end{algorithm}

Algorithm \ref{alg1.supp} presents a BAMIFun algorithm which, instead of sampling the smoothing parameter $\sigma_B^2$, fixes the value of $\sigma_B^2$ as prespecified. 
Since Algorithm \ref{alg1.supp} requires prespecifying the smoothing parameter $\sigma_B^2$ and the number of principal components $R$, we propose an alternative algorithm for users to select the smoothing parameter $\sigma_B^2$ through cross-validation. The algorithm with cross-validation is described in Algorithm \ref{alg2.supp}. Algorithm \ref{alg2.supp} only requires the input of frequentist FPCA results as the initial values and the parameter grid for cross-validation. For the default parameter grid, we propose to search across $R\in \{R_{\textnormal{freq}}-2, R_{\textnormal{freq}}-1, R_{\textnormal{freq}}\}$ and $\sigma^2_B\in \{0.2, 1,10,100\}$. Users may change the parameter grid based on their specific functional data. The cross-validation randomly selects $40\%$ of the observed data as a validation set and the rest of the data as the training set. The Bayesian functional imputation Algorithm \ref{alg1.supp} will be conducted using the training data with each combination of the parameters. The algorithm then selects the parameters that minimize the imputation mean squared error (MSE) on the validation set. 

\begin{algorithm}
	\caption{{Cross-validation algorithm for rank $R$ and $\sigma_B^2$}} 
 \label{alg2.supp}
	\begin{algorithmic}[1]
        \State \textbf{Input}: observed functional data $\{\mathbf{X}, \mathcal{O}=1\}$.
        \State Run the frequentist face.sparse() algorithm to $\{\mathbf{X}, \mathcal{O}=1\}$ with percent variance explained $\textnormal{pve} = 0.99$ and get the estimated eigenfunctions $\hat{\mathbf{U}}_{\textnormal{init}}$, scores $\hat{\mathbf{V}}_{\textnormal{init}}$, and number of principal components $R_{\textnormal{freq}}$.
        \State Set the parameters grid setting for cross-validation: $R\in \{R_{\textnormal{freq}}-2, R_{\textnormal{freq}}-1, R_{\textnormal{freq}}\}$ and $\sigma^2_B\in \{0.2, 1,10,100\}$.
        \State Randomly select $40\%$ of the observed data points as validation set $\mathcal{V}=1$ and the rest of the observed data as training set $\mathcal{V}=0$.
        \For {$c=1,...,C$-th cross-validation setting}
            \State Run Algorithm \ref{alg1.supp} with settings $R=R[c]$ and $\sigma_B^2=\sigma_B^2[c]$ and initial value $\mathbf{U}=\hat{\mathbf{U}}_{\textnormal{init}}[c], \mathbf{V}=\hat{\mathbf{V}}_{\textnormal{init}}[c]$ where $R[c]$ and $\sigma_B^2[c]$ are the $c$-th value of the expanded parameter grid, $\hat{\mathbf{U}}_{\textnormal{init}}[c]$ is the first $R[c]$ eigenfunctions from $\hat{\mathbf{U}}_{\textnormal{init}}$ and $\hat{\mathbf{V}}_{\textnormal{init}}[c]$ similarly.
            \State Calculate the imputation mean squared error over the validation set $\mathcal{V}=1$.
        \EndFor	
        \State Select the parameter $R$ and $\sigma^2_B$ with the smallest MSE over the validation set. 
        \State Run Algorithm \ref{alg1.supp} with the selected parameter. 
        \State \textbf{Output}: Bayesian posterior draws of the imputed functional data $\{\hat{{\mathbf{X}}}^{1},...,\hat{{\mathbf{X}}}^S\}$.
	\end{algorithmic} 
\end{algorithm}

\section{Bayesian multiple imputation algorithm for multiway functional data}
\label{sec:supp-multiway}

\subsection{Difference between the FTSVD and the MFPCA models}
The FTSVD model directly specifies a low-rank decomposition of the entire functional tensor (e.g., a CP-type factorization), in which the latent structure is represented through multiplicatively coupled factors across the discrete modes (e.g., subject/feature/visit) and a smooth functional factor along the time domain. In contrast, the MFPCA framework treats the data as a collection of clustered/hierarchical curves ${X_{ij}(t)}$ and characterizes the dependence through an additive functional mixed-effects representation, decomposing between- and within-cluster variability via separate Karhunen–Loève expansions with level-specific eigenfunctions and scores. Thus, the FTSVD model is most suitable when the data are naturally a tensor, e.g., many subjects observed on a common time interval for multiple features/visits, so that each curve can be viewed as one slice of a coherent $N\times J\times K$ array and the dependence is well captured by a shared low-rank multilinear structure across modes. When the data are primarily clustered functional observations with irregular/sparse time points and strong additive within-cluster subject-specific deviations (random intercept/slope-type behavior), FTSVD is less appropriate because it does not directly separate between- and within-cluster variability or provide level-specific covariance/eigenfunction interpretations as in MFPCA. 
\subsection{Gibbs sampler with full data}
\label{sec:supp-multiway-gibbs}
We now provide a Gibbs sampler algorithm, whose full conditional posterior distributions are as follows:

\begin{itemize}
    \item Conditional on $\mathbf{U}^T=\mathbf{B\Theta}$, $\mathbf{W}$, $\sigma^2$, and $\sigma_B^2$, for $i=1,\ldots,N$, draw the $i$-th row $\boldsymbol{v}_{i\cdot}$ of $\mathbf{V}$ from
    \begin{equation*}
       \boldsymbol{v}_{i\cdot}^T \sim N\!\left(\tfrac{1}{\sigma^2}\boldsymbol{\Sigma}_{v}(\mathbf{U}\odot\mathbf{W})^T[\mathcal{X}_{(1)}]_{i}^T,\;\; \boldsymbol{\Sigma}_{v}\right),\qquad \boldsymbol{\Sigma}_{v}=\Bigl(\tfrac{1}{\sigma^2}(\mathbf{U}\odot\mathbf{W})^T(\mathbf{U}\odot\mathbf{W})+\tfrac{1}{\sigma_V^2}\mathbf{I}_R\Bigr)^{-1}
    \end{equation*} 
where $[\mathcal{X}_{(1)}]_{i}$ is the $i$-th row of the matrix $\mathcal{X}_{(1)}$. 
    \item Conditional on $\mathbf{U}^T=\mathbf{B\Theta}$, $\mathbf{V}$, $\sigma^2$, and $\sigma_B^2$, for $j=1,\ldots,J$, draw the $j$-th row $\boldsymbol{w}_{j\cdot}$ of $\mathbf{W}$ from
    \begin{equation*}
       \boldsymbol{w}_{j\cdot}^T \sim N\!\left(\tfrac{1}{\sigma^2}\boldsymbol{\Sigma}_{w}(\mathbf{U}\odot\mathbf{V})^T[\mathcal{X}_{(2)}]_{j}^T,\;\; \boldsymbol{\Sigma}_{w}\right),\qquad \boldsymbol{\Sigma}_{w}=\Bigl(\tfrac{1}{\sigma^2}(\mathbf{U}\odot\mathbf{V})^T(\mathbf{U}\odot\mathbf{V})+\tfrac{1}{\sigma_W^2}\mathbf{I}_R\Bigr)^{-1}
    \end{equation*} 
where $[\mathcal{X}_{(2)}]_{j}$ is the $j$-th row of the matrix $\mathcal{X}_{(2)}$. 
    \item Conditional on $\mathbf{V}$, $\mathbf{W}$, $\sigma^2$, and $\sigma_B^2$, draw $\textnormal{Vec}(\mathbf{B})$ following $\textnormal{Vec}(\mathbf{B}) \sim N(\boldsymbol{\mu}_{B}, \boldsymbol{\Sigma}_{B})$ where
    \begin{equation*}
                \boldsymbol{\Sigma}_B = \bigg(\frac{1}{\sigma^2}(\mathbf{\Theta}^T \otimes (\mathbf{W}\odot\mathbf{V}))^T(\mathbf{\Theta}^T \otimes (\mathbf{W}\odot\mathbf{V}))+\frac{1}{\sigma_B^2}{\mathbf{P}\otimes\mathbf{I}_R}\bigg)^{-1}
    \end{equation*}
    \begin{equation*}
                \boldsymbol{\mu}_B = \frac{1}{\sigma^2} \boldsymbol{\Sigma}_B (\mathbf{\Theta}^T \otimes (\mathbf{W}\odot\mathbf{V}))^T\textnormal{Vec}(\mathcal{X}_{(3)}^T)
    \end{equation*}
    \item Given $\mathbf{V}$, $\mathbf{U}^T = \mathbf{B} \mathbf{\Theta}$, $\mathbf{W}$, and $\sigma_B^2$, the distribution of $\sigma^2$ follows:
    \begin{equation*}
        \sigma^2\sim \textnormal{IG}\left(\tfrac{NJK}{2}+a_\sigma,\;\; \tfrac12||\mathcal{X}-\hat{\mathcal{X}}||^2_{F}+b_\sigma\right)
    \end{equation*}
    where $\hat{\mathcal{X}}=\llbracket \mathbf{V}, \mathbf{W}, \mathbf{U}\rrbracket$, and $||\cdot||_F$ is the Frobenius norm.
    \item Given $\mathbf{V}$, $\mathbf{U}^T = \mathbf{B} \mathbf{\Theta}$, $\mathbf{W}$, and $\sigma^2$, the distribution of $\sigma_B^2$ follows:
    \begin{equation*}
        \sigma_B^2\sim \textnormal{IG}\left(\tfrac{qR}{2}+a_B,\;\; \tfrac12(\textnormal{Vec}(\mathbf{B}))^T(\mathbf{P}\otimes\mathbf{I}_R)\textnormal{Vec}(\mathbf{B})+b_B\right),\qquad q=L-2.
    \end{equation*}
\end{itemize}

\subsection{Bayesian multiple imputation algorithm}

\begin{algorithm}
	\caption{Bayesian multiple imputation for multiway functional data} 
 \label{alg3}
	\begin{algorithmic}[1]
        \State \textbf{Input}: observed functional data $\{\mathcal{X}, \mathcal{O}=1\}$; rank $R$; spline basis matrix $\mathbf{\Theta}$ and the penalty matrix $\mathbf{P}$.
        \State Set the initial value of $\mathbf{U}$ as $(\hat{\mathbf{B}}_{\textnormal{init}}\mathbf{\Theta})^T$, $\mathbf{V}$ as $\hat{\mathbf{V}}_{\textnormal{init}}$, $\mathbf{W}$ as $\hat{\mathbf{W}}_{\textnormal{init}}$, $\sigma^2=1$, and $\sigma_B^2=1$. Impute the missing elements of $\{\mathcal{X}, \mathcal{O}=0\}$ using $\llbracket \hat{\mathbf{V}}_{\textnormal{init}}, \hat{\mathbf{W}}_{\textnormal{init}}, \hat{\mathbf{U}}_{\textnormal{init}}\rrbracket$.

        \For {$s=1,...,S$-th MCMC iteration}
            \State Draw $\hat{\mathbf{V}}^s$, $\hat{\mathbf{B}}^s$, $\hat{\mathbf{W}}^s$ one by one, following their full conditional distributions described before.
            \State Calculate $(\hat{\mathbf{U}}^s)^T = \hat{\mathbf{B}}^s\mathbf{\Theta}$.
            \State Draw $(\hat{\sigma}^2)^{s}$ and $(\hat{\sigma}_B^2)^{s}$ following their full conditional distributions.
            \State Calculate the underlying structure of ${\mathcal{X}}$ as $\Tilde{{\mathcal{X}}}^s = \llbracket \hat{\mathbf{V}}^s, \hat{\mathbf{W}}^s, \hat{\mathbf{U}}^s\rrbracket$.
            \State Impute the missing value $\{\hat{{\mathcal{X}}}^s, \mathcal{O}=0\}$ following the Gaussian distribution with mean $\Tilde{{\mathcal{X}}}^s$ and variance $(\hat{\sigma}^2)^{s}$.
            \State Rescale each column of $\hat{\mathbf{U}}^s$, $\hat{\mathbf{V}}^s$, and $\hat{\mathbf{W}}^s$ such that each column has the same norm.
        \EndFor		
        \State \textbf{Output}: Bayesian posterior draws of the imputed functional data $\{\hat{{\mathcal{X}}}^{1},...,\hat{{\mathcal{X}}}^S\}$.
	\end{algorithmic} 
\end{algorithm}

The Bayesian multiple imputation algorithm for multiway functional data is described in Algorithm \ref{alg3}. Similar to the algorithm for single-level functional data, Algorithm \ref{alg3} requires pre-specifying the rank $R$, spline basis matrix $\mathbf{\Theta}$ and the corresponding penalty matrix $\mathbf{P}$. The initial values of $\hat{\mathbf{V}}_{\textnormal{init}}$, $\hat{\mathbf{W}}_{\textnormal{init}}$, and $\hat{\mathbf{B}}_{\textnormal{init}}$ can be randomly drawn from a standard Gaussian distribution. In the rescale step, for each $r=1,...,R$, we rescale %
\begin{equation*}
    \hat{\mathbf{U}}^s_{\cdot r} = \hat{\mathbf{U}}^s_{\cdot r} \frac{(\lambda_r)^{1/3}}{||\hat{\mathbf{U}}^s_{\cdot r}||_F}, \;\;\;\;\hat{\mathbf{V}}^s_{\cdot r} = \hat{\mathbf{V}}^s_{\cdot r} \frac{(\lambda_r)^{1/3}}{||\hat{\mathbf{V}}^s_{\cdot r}||_F}, \;\;\; \textnormal{and} \;\hat{\mathbf{W}}^s_{\cdot r} = \hat{\mathbf{W}}^s_{\cdot r} \frac{(\lambda_r)^{1/3}}{||\hat{\mathbf{W}}^s_{\cdot r}||_F}
\end{equation*}
where $\lambda_r = ||\hat{\mathbf{U}}^s_{\cdot r}||_F||\hat{\mathbf{V}}^s_{\cdot r}||_F||\hat{\mathbf{W}}^s_{\cdot r}||_F$, $\hat{\mathbf{U}}^s_{\cdot r}$, $\hat{\mathbf{V}}^s_{\cdot r}$, and $\hat{\mathbf{W}}^s_{\cdot r}$ are the $r$-th column of $\hat{\mathbf{U}}^s$, $\hat{\mathbf{V}}^s$, and $\hat{\mathbf{W}}^s$, respectively. Other parts of Algorithm \ref{alg3} are the same as those presented in Algorithm~1 of the main manuscript. 
In practice, we select the key parameter $R$ using cross-validation, similar to the single-level Algorithm \ref{alg2}.

\clearpage

\section{Additional simulation results}\label{sec:supp-sims}
\subsection{Convergence check for the single-level simulation}
In this subsection, we present additional results for our simulation with single-level functional data. We first present the potential scale reduction factor (PSRF) for the two Bayesian algorithms in Figure \ref{fig:srf_single}. From the results, our BAMIFun algorithm does not have a convergence issue even with $95\%$ missingness and small sample sizes. The BAMITA algorithm, however, has some convergence issues when the sample size is small and the proportion of missingness is high. We therefore increase the number of iterations to $3000$ for the BAMITA algorithm.

\begin{figure}[ht] %
\centering
\includegraphics[width=\textwidth]{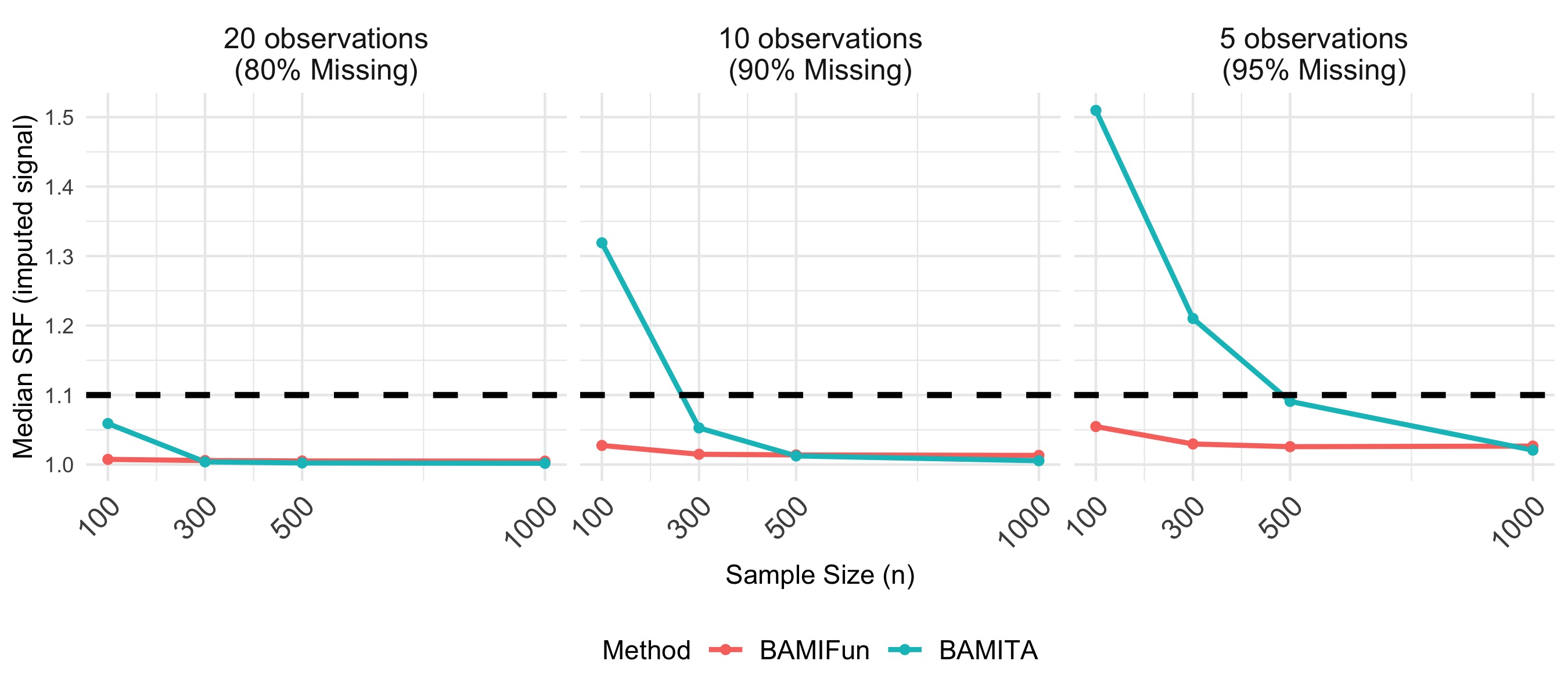}
\caption{Median potential scale reduction factor (PSRF) results across all simulation replications for the single-level functional data. A PSRF close to 1 indicates a successful convergence of the Bayesian posterior samples. We use the PSRF threshold of 1.1 in the figure.}
\label{fig:srf_single}
\end{figure}

\clearpage

\subsection{Single-level simulation with IG prior on the smoothing parameter}
We further conducted a sensitivity analysis on the single-level simulation, replacing the default weakly-informative inverse-gamma $\textnormal{IG}(0.01,0.01)$ prior on $\sigma_B^2$ with a more informative $\textnormal{IG}(2,1)$ prior. As shown in Figure \ref{fig:imputation_sens}, the two priors yield nearly identical results: the relative imputation {RMSE} and the element-wise coverage essentially coincide across all sample sizes and missingness levels. The only appreciable difference appears under the most extreme missingness ($95\%$, i.e., five observed time points) at the smallest sample size, where the $\textnormal{IG}(2,1)$ prior is marginally more conservative---a slightly higher {RMSE} accompanied by slightly better, closer-to-nominal coverage---and this difference vanishes as the sample size grows. We therefore conclude that the imputation and downstream inference are robust to the choice of prior on $\sigma_B^2$.

\begin{figure}[ht] %
\centering
\includegraphics[width=\textwidth]{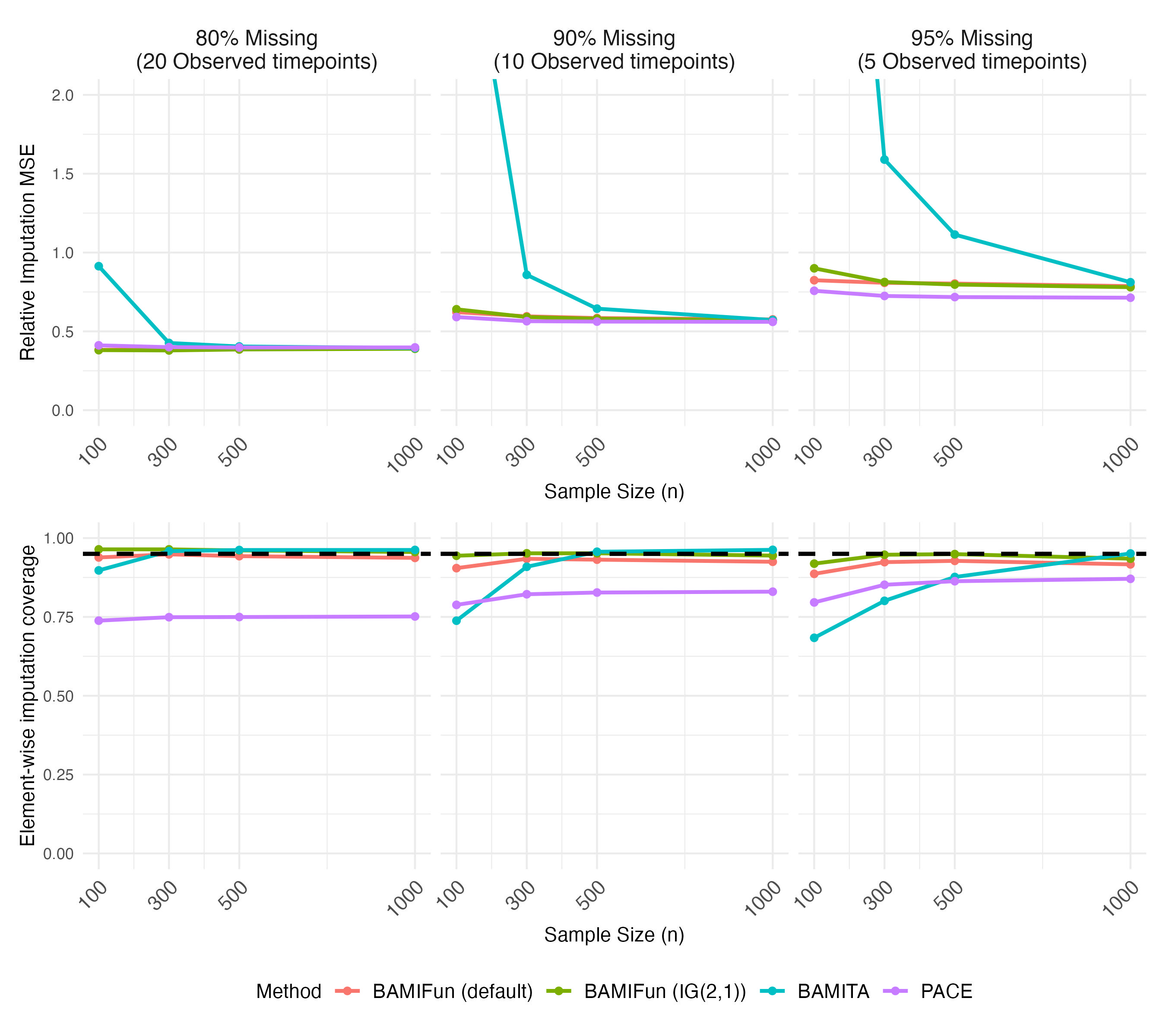}
\caption{Sensitivity of BAMIFun to the prior on the smoothing parameter
$\sigma_B^2$ for single-level functional data. BAMIFun under its default weakly-informative $\textnormal{IG}(0.01,0.01)$ prior on $\sigma_B^2$ is compared with BAMIFun under a more informative $\textnormal{IG}(2,1)$ prior, alongside BAMITA and PACE. }
\label{fig:imputation_sens}
\end{figure}

\subsection{Single-level simulation with an increasing number of time points}

The relative performance of BAMITA and the smoothness-constraint methods (BAMIFun and PACE) depends critically on how finely the functional domain is sampled. To further demonstrate this point, we conduct an additional single-level experiment that varies the total number of time points $K$ while holding the number of \emph{observed} time points per subject fixed at the average of $10$. Because each subject contributes a fixed amount of information regardless of $K$, refining the grid does not change the intrinsic difficulty of recovering a smooth curve from ${\sim}10$ noisy samples; it only increases the resolution of the target grid, and hence the missing proportion, which is now a derived quantity equal to $1-10/K$.

\subsubsection*{Data-generating mechanism}
We generate single-level functional data using the same mechanism as the first simulation experiment (Section 5.1 of the main manuscript). For each subject $i=1,\ldots,N$,
\begin{equation*}
X_i(t_k)=\sum_{h=1}^{12}\xi_{ih}\,u_h(t_k)+\epsilon_i(t_k),\qquad k=1,\ldots,K,
\end{equation*}
where $\{u_h\}_{h=1}^{12}$ are the $12$ eigenfunctions obtained from the NHANES data application, evaluated on an equally spaced grid of $K$ points (obtained by subsampling the underlying $1440$-point grid); the scores are $\xi_{ih}\sim N(0,\lambda_h)$ with eigenvalues $\{\lambda_h\}_{h=1}^{12}=(2,2,2,1,1,1,0.5,0.5,0.5,0.1,0.1,0.1)$, and $\epsilon_i(t_k)\stackrel{\text{iid}}{\sim}N(0,1)$. We vary the total number of time points $K\in\{50,100,200,400\}$ (the first simulation experiment used $K=100$) and the sample size $N\in\{100,300\}$. For each subject, the number of observed time points is drawn uniformly on $\{6,\ldots,14\}$ (i.e., $10\pm4$, independent of $K$) at uniformly random locations, and the remaining entries are treated as missing. The resulting missing proportion is therefore $80\%$, $90\%$, $95\%$, and $97.5\%$ for $K=50, 100, 200$, and $400$, respectively. Each scenario is evaluated over $500$ replications.

\subsubsection*{Methods and metrics}
We compare BAMIFun (using $L=15$ B-spline basis functions), the FPCA-based PACE algorithm, and BAMITA, as in the first simulation experiment, except that the rank $R$ is taken directly from the frequentist FPCA fit rather than by cross-validation. We report the relative imputation {RMSE} over the missing cells (median over replications) and the element-wise $95\%$ coverage of the imputed values (mean over replications). For PACE, the interval is the point estimate $\pm\,1.96$ times the estimated standard error from \texttt{face.sparse()}; for BAMIFun and BAMITA, it is the $2.5\%$--$97.5\%$ posterior quantile interval.

\subsubsection*{Results}
Figure~\ref{fig:varyingK} summarizes the results. Consistent with the smoothness argument, BAMIFun and PACE---both of which borrow strength across neighboring time points, through a spline basis and estimated eigenfunctions, respectively---remain essentially flat as $K$ increases, with a {RMSE} close to $0.6$ across all values of $K$. BAMITA, by contrast, treats time as an unordered categorical mode and cannot pool information across adjacent points; as the grid is refined, its relative imputation {RMSE} deteriorates sharply, rising from $0.99$ at $K=50$ to $55.9$ at $K=400$ when $N=100$ (too large to present in the figure), with a corresponding collapse in element-wise coverage from $93.2\%$ to $53.3\%$ at $N=100$. BAMIFun maintains coverage close to the nominal level throughout ($93.3\%$--$94.0\%$ at $N=100$ and $96.8\%$--$98.8\%$ at $N=300$), whereas the single-imputation PACE algorithm, although comparably accurate in {RMSE}, severely under-covers (around $79\%$ at $N=100$ and $82\%$ at $N=300$). These results reinforce the central message that incorporating smoothness allows BAMIFun to remain stable and well-calibrated even as the functional grid becomes arbitrarily fine.

\begin{figure}[ht]
\centering
\includegraphics[width=\textwidth]{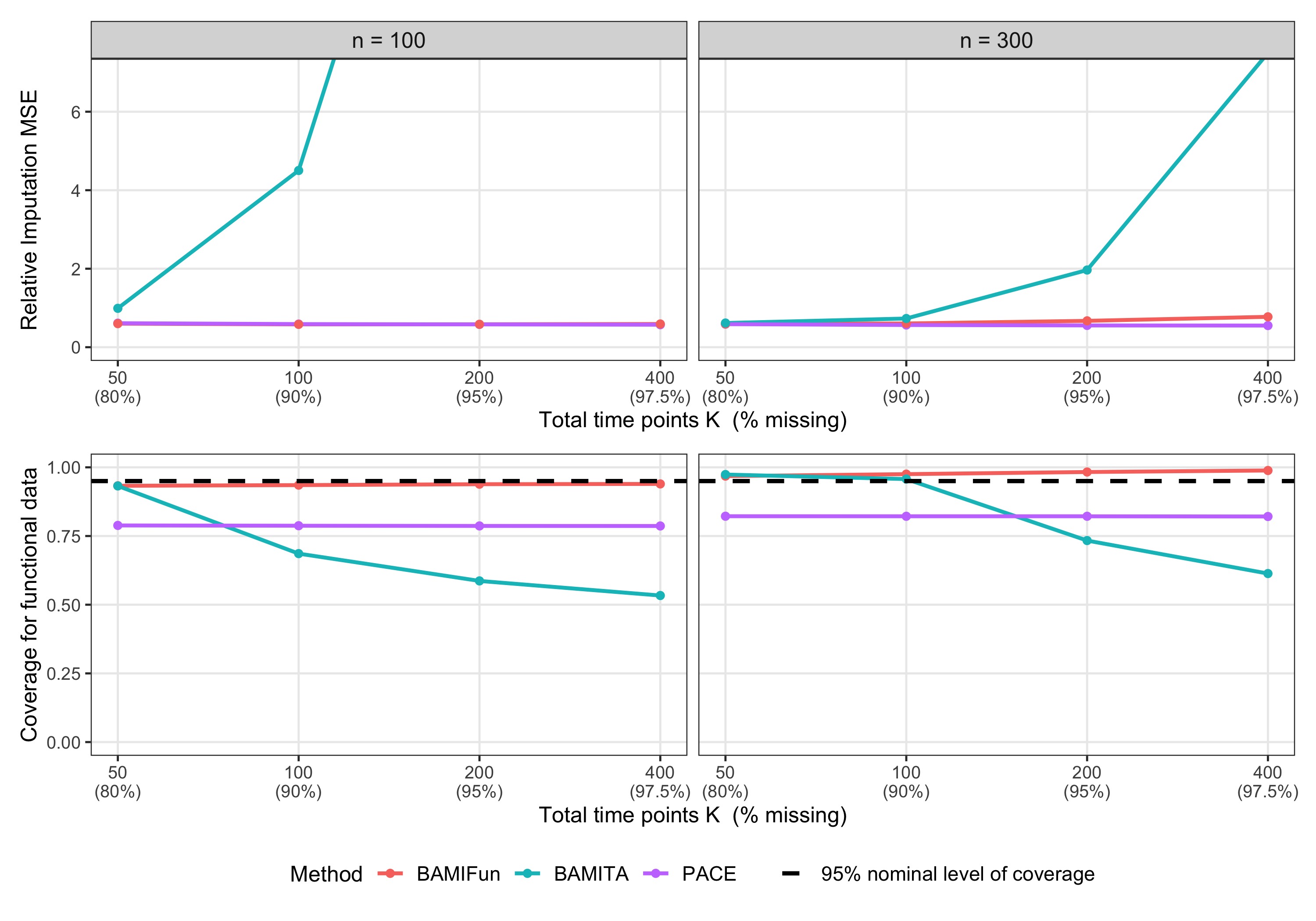}
\caption{Single-level imputation performance as the total number of time points $K$ increases (with ${\sim}10$ observed time points per subject, so that the missing proportion, shown in parentheses on the horizontal axis, equals $1-10/K$). Columns correspond to the sample size $N$; the top row displays the relative imputation {RMSE} (truncated at $7$ for visibility) and the bottom row the element-wise $95\%$ coverage, with the dashed line marking the nominal $0.95$ level. As $K$ grows, BAMIFun and PACE remain stable while BAMITA's error grows and its coverage collapses; among the two methods that remain stable in RMSE, only BAMIFun attains near-nominal coverage.}
\label{fig:varyingK}
\end{figure}

\clearpage

\subsection{Convergence check for the multiway simulation}
In this subsection, we present additional results for our simulation with multiway functional data. We first present the potential scale reduction factor (PSRF) for the two Bayesian algorithms in Figure \ref{fig:srf_multi}. Similar to the single-level setting, our BAMIFun algorithm does not have a convergence issue even with $95\%$ missingness and small sample sizes. The BAMITA algorithm, however, exhibits some convergence issues when the sample size is small and the proportion of missingness is high. 

\begin{figure}[ht] %
\centering
\includegraphics[width=\textwidth]{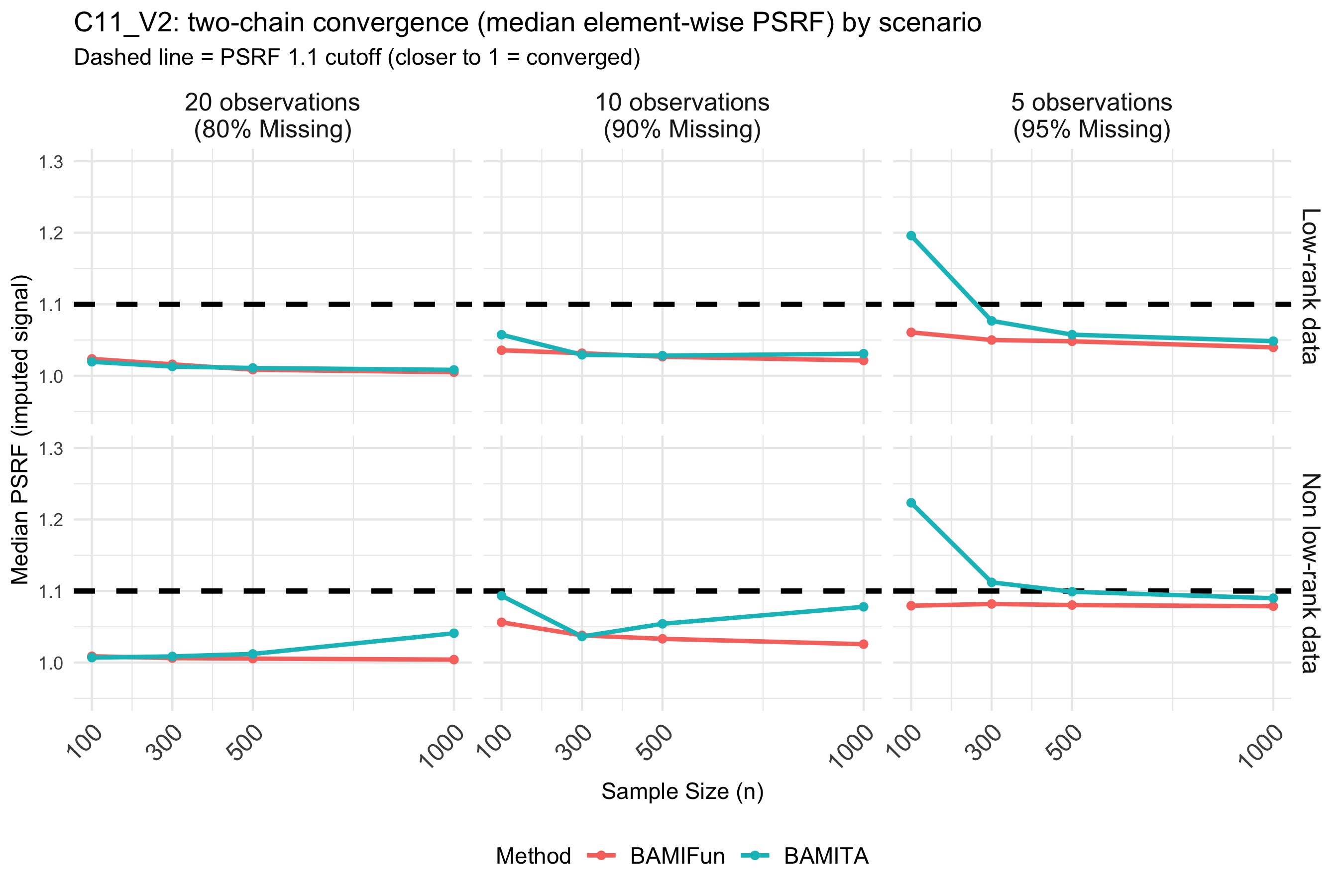}
\caption{Median potential scale reduction factor (PSRF) results across all simulation replications for the multiway functional data. A PSRF close to 1 indicates a successful convergence of the Bayesian posterior samples. We use the PSRF threshold of 1.1 in the figure.}
\label{fig:srf_multi}
\end{figure}

\clearpage

\subsection{Multiway simulation with non-low-rank functional data}
For the scenario without assuming a low rank structure, we generate $X_{i j}(t_k)=\sum_{r=1}^2 a_{i r} \,u_r(t_k)+\sum_{r=1}^4 b_{i j r} \,u_{r+4}(t_k) + \epsilon_{ijk}$ where $a_{i r}, b_{i j r} \sim N(0, \sigma_r^2)$, $\epsilon_{ijk}\sim N(0,1)$ and $u_r$ are again generated according to the data application. Results are presented in Table~\ref{tab:c7v10_nonlowrank}. From the results, our BAMIFun algorithm has a higher {RMSE} compared with the PACE algorithm, which is as expected since the true functional data are generated under the multilevel FPCA structure. However, our BAMIFun algorithm still performs better than the BAMITA algorithm with higher stability. 

\begin{table}[ht]
\centering
\small
\setlength{\tabcolsep}{6pt}
\caption{Imputation performance on \textbf{non-low-rank} functional data (effective rank $\approx$ 12): relative imputation {RMSE} with the mean 95\% element-wise coverage (in parentheses, \%). Coverage is not reported for PACE.}
\label{tab:c7v10_nonlowrank}
\begin{tabular}{cc ccc}
\toprule
Sample & Observations & \multicolumn{3}{c}{Method} \\
\cmidrule(lr){3-5}
Size & (Missing Prop.) & PACE & BAMIFun & BAMITA \\
\midrule
100 & 20 (80\%) & 0.258($-$) & 0.484(94.3) & 0.515(94.2) \\
 & 10 (90\%) & 0.325($-$) & 0.524(94.1) & 0.649(93.3) \\
 & 5 (95\%) & 0.437($-$) & 0.597(93.4) & 1.390(89.0) \\
\addlinespace
300 & 20 (80\%) & 0.251($-$) & 0.495(94.3) & 0.504(94.3) \\
 & 10 (90\%) & 0.306($-$) & 0.528(94.1) & 0.553(93.9) \\
 & 5 (95\%) & 0.403($-$) & 0.590(93.6) & 0.677(92.8) \\
\addlinespace
500 & 20 (80\%) & 0.249($-$) & 0.496(94.3) & 0.502(94.3) \\
 & 10 (90\%) & 0.302($-$) & 0.528(94.1) & 0.545(94.0) \\
 & 5 (95\%) & 0.394($-$) & 0.588(93.7) & 0.632(93.2) \\
\addlinespace
1000 & 20 (80\%) & 0.247($-$) & 0.499(94.3) & 0.501(94.3) \\
 & 10 (90\%) & 0.298($-$) & 0.532(94.1) & 0.539(94.1) \\
 & 5 (95\%) & 0.387($-$) & 0.588(93.8) & 0.606(93.5) \\
\bottomrule
\end{tabular}
\end{table}

\subsection{Computational cost}
Figures~\ref{fig:time_single} and \ref{fig:time_multi} report the median per-replication runtime of each method in the simulation study. The frequentist FPCA imputation (PACE/MFPCA) is fast, completing in under five seconds per replication across all configurations.
BAMIFun is more expensive: in the single-level simulation it ranges from roughly $40$~seconds at $N=100$ to about $7$~minutes at $N=1{,}000$, and in the multiway simulation from roughly $2$~minutes at $N=100$ to about $18$~minutes at $N=1{,}000$, reflecting the additional cost of its penalized-spline smoothness updates within the Gibbs sampler. For all methods the runtime scales approximately linearly in the sample size $N$
and is essentially flat across the missing-data levels---the per-iteration cost is governed by the size of the data grid and the number of latent components rather than by the missing fraction---and it is comparable across the low-rank and non-low-rank data-generating scenarios.

\begin{figure}[ht] %
\centering
\includegraphics[width=\textwidth]{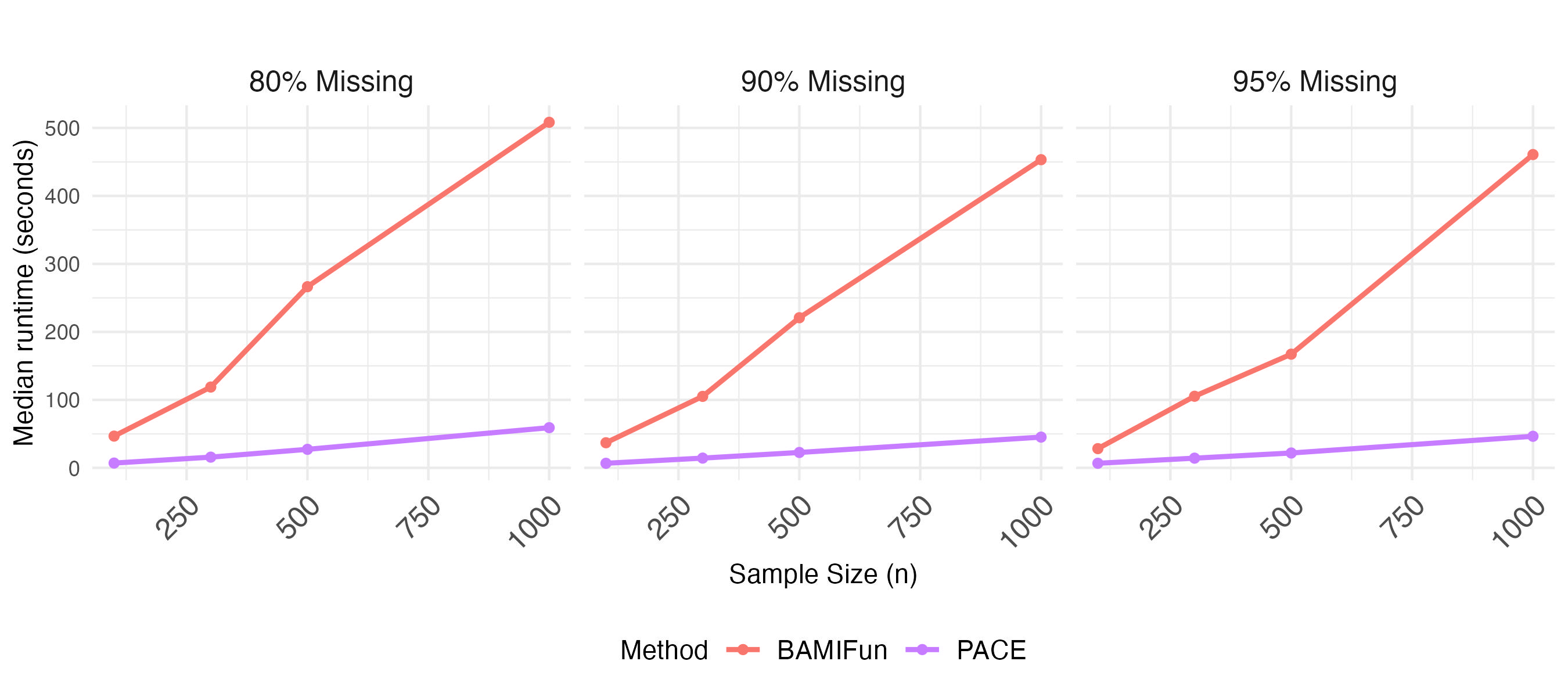}
\caption{Median per-replication runtime for the BAMIFun, PACE, and BAMITA algorithms in the single-level functional data simulation across sample sizes.}
\label{fig:time_single}
\end{figure}

\begin{figure}[ht] %
\centering
\includegraphics[width=\textwidth]{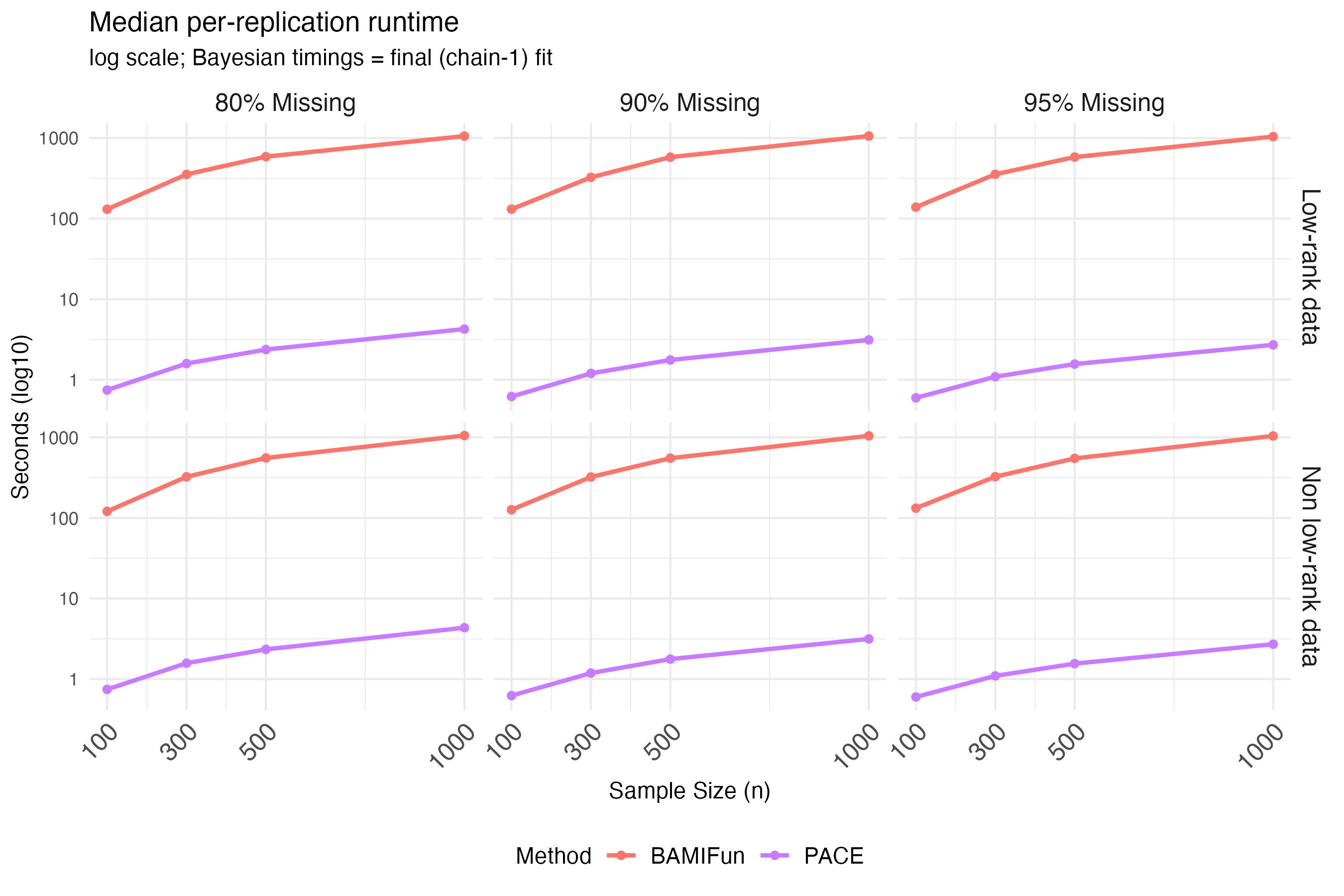}
\caption{Median per-replication runtime for the BAMIFun, PACE, and BAMITA algorithms in the multiway functional data simulation across sample sizes.}
\label{fig:time_multi}
\end{figure}

\section{Additional data application results}\label{sec:supp-apps}
\subsection{NHANES data downstream analysis}
To demonstrate that BAMIFun carries through to a downstream inferential task, we revisit the NHANES accelerometry data and relate the subject-specific physical-activity profile to all-cause mortality through a scalar-on-function regression. We restrict attention to adults aged at least $50$ with complete covariate information and analyze a random sample of $n = 3{,}880$ subjects. For subject $i$, let $Y_i \in \{0,1\}$ denote the mortality indicator, $X_i(t)$ the Monitor-Independent Movement Summary (MIMS) activity intensity at time of day $t$ over the $24$-hour day $\mathcal{T}$, and $\mathbf{Z}_i = (\text{age}_i,\text{gender}_i,\text{BMI}_i)^{\top}$ a vector of scalar covariates. The downstream model is the logistic scalar-on-function regression
\begin{equation}
  \operatorname{logit}\bigl\{\Pr(Y_i = 1)\bigr\}
  \;=\; \alpha \;+\; \int_{\mathcal{T}} \beta(t)\,X_i(t)\,\mathrm{d}t
  \;+\; \mathbf{Z}_i^{\top}\boldsymbol{\gamma},
  \qquad i = 1,\dots,n,
  \label{eq:sofr}
\end{equation}
where $\beta(t)$ is the functional coefficient, the log-odds of mortality per unit of activity at time of day $t$. We represent $\beta(\cdot)$ with a cubic penalized regression spline and estimate \eqref{eq:sofr} by penalized likelihood with smoothing parameter selected by REML. 

To create the sparsely sampled regime that BAMIFun targets, and to enable a comparison against a known complete-data benchmark, we induce missingness completely at random on the fully observed activity profiles at three levels ($90\%$, $95\%$, and $97.5\%$ missing). We then generate the imputed activity profiles from the BAMIFun posterior, with the number of components selected by functional principal component analysis, and refit \eqref{eq:sofr} using each imputed dataset. Denote by $\widehat{\beta}^{(m)}(t)$ the estimate from the $m$-th imputation, we propagate the imputation uncertainty into the downstream coefficient by Rubin's rules. 

Figure~\ref{fig:nhanes-downstream} summarizes the downstream analysis results. The estimated functional coefficient $\widehat{\beta}(t)$ traces a clear contrast: positive in the early-morning and overnight hours and negative around midday. Crucially, the BAMIFun multiple-imputation estimate recovers this complete-data association at every missing-data level: the pooled $\bar{\beta}(t)$ closely tracks the oracle coefficient fit to the unmasked profiles even when $90$--$97.5\%$ of each subject's curve is unobserved. The inference, however, correctly reflects the information lost to missingness. Measured against the complete-data coefficient, the Rubin band attains nominal coverage at $90\%$ and $95\%$ missing (containing the oracle $\widehat{\beta}(t)$ at all evaluated time points) and undercovers only at the most extreme $97.5\%$ setting.

\begin{figure}[ht] %
\centering
\includegraphics[width=\textwidth]{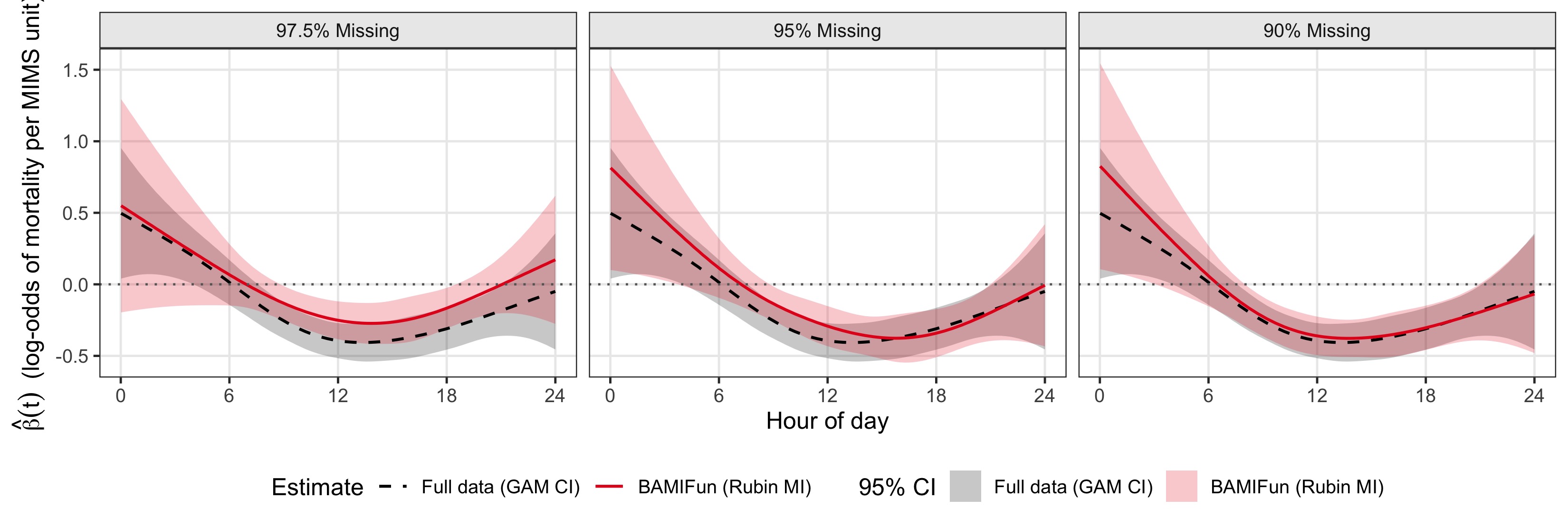}
\caption{BAMIFun recovers the daily activity–mortality association with honestly propagated imputation uncertainty (NHANES)}
\label{fig:nhanes-downstream}
\end{figure}

\clearpage

\end{document}